\newtheorem{observation}[theorem]{Observation}
\newcommand{\reals}{\mathbb{R}}
\newcommand{\ignore}[1]{}
\newcommand{\rob}[1]{\mathscr R_{#1}}
\newcommand{\ben}[1]{\mathscr B_{#1}}
\newcommand{\algo}{\mathscr{A}}
\newcommand{\norm}[1]{\left\lVert#1\right\rVert}
\newcommand{\sinn}[1]{\sin \left({#1}\right)}
\newcommand{\coss}[1]{\cos \left({#1}\right)}
\newcommand{\ki}[1]{C_{#1}}
\newcommand{\ci}[1]{\mathcal{C}(#1)}
\newcommand{\li}[1]{\mathrm{line}(#1)}
\newcommand{\cycle}{\mathrm{cycle}}
\newcommand{\cost}{\mathcal{C}}
\newcommand{\se}{\mathcal{S}}
\newcommand{\ev}{\mathcal{E}}
\newcommand{\avg}[1]{\mathrm{Avg}\left(#1\right)}
\newcommand{\wrs}[1]{\mathrm{Wrs}\left(#1\right)}
\newcommand{\E}{\mathrm{E}}
\newcommand{\evac}{$_2$\textsc{Evac}$_{F2F}$}
\newcommand{\evacw}{$_2$\textsc{Evac}$_{F2F}^w$}
\begin{document}


\title{Average Case - Worst Case Tradeoffs \\ for Evacuating 2 Robots from the Disk in the Face-to-Face Model
\thanks{
This is the full version of the paper, with the same title and authors, that was accepted in the 14th International Symposium on Algorithms and Experiments for Wireless Sensor Networks (ALGOSENSORS’18),  23--24 August 2018, Helsinki, Finland.}
}

\author{
Huda Chuangpishit
\and
Konstantinos Georgiou
\thanks{Research supported in part by NSERC Discovery Grant.}
\and
Preeti Sharma
}

\institute{
Department of Mathematics, Ryerson University \\
350 Victoria St, Toronto, ON, M5B 2K3, Canada \\
\email{h.chuang,konstantinos,preeti.sharma@ryerson.ca}
}

\maketitle

\begin{abstract}
The problem of evacuating two robots from the disk in the face-to-face model was first introduced in~\cite{CGGKMP}, and extensively studied (along with many variations) ever since with respect to worst case analysis.
We initiate the study of the same problem with respect to average case analysis,
which is also equivalent to designing randomized algorithms for the problem. 
First we observe that algorithm $\ben{2}$ of~\cite{CGGKMP} with worst case cost $\wrs{\ben{2}}:=5.73906$ has average case cost $\avg{\ben{2}}:=5.1172$. Then we verify that none of the algorithms that induced worst case cost improvements in subsequent publications has better average case cost, hence concluding that our problem requires the invention of new algorithms. 
Then, we observe that a remarkable simple algorithm, $\ben{1}$, has very small average case cost $\avg{\ben{1}}:=1+\pi$, but very high worst case cost $\wrs{\ben{1}}:=1+2\pi$. 

Motivated by the above, we introduce constrained optimization problem \evacw, in which one is trying to minimize the average case cost of the evacuation algorithm given that the worst case cost does not exceed $w$. The problem is of special interest with respect to practical applications, since a common objective in search-and-rescue operations is to minimize the average completion time, given that a certain worst case threshold is not exceeded, e.g. for safety or limited energy reasons. 

Our main contribution is the design and analysis of families of new evacuation parameterized algorithms $\algo(p)$ which can solve \evacw, for every $w \in [\wrs{\ben{1}},\wrs{\ben{2}}]$. In particular, by letting parameter(s) $p$ vary, we obtain parametric curve  
$\left(
\avg{\algo(p)}, \wrs{\algo(p)}
\right)$
that induces a continuous and strictly decreasing function in the mean-worst case space, and whose endpoints are 
$\left(
\avg{\ben{1}}, \wrs{\ben{1}}
\right)$
,
$\left(
\avg{\ben{2}}, \wrs{\ben{2}}
\right)$.
Notably, the worst case analysis of the problem, since it's introduction, has been relying on technical numerical, computer-assisted, calculations, following tedious robots trajectories' analysis. Part of our contribution is a novel systematic procedure, which, given \textit{any evacuation algorithm}, can derive it's worst and average case performance in a clean and unified way. \\
\noindent
{\bf Key words and phrases.}
Evacuation, 
Disk,
Face-to-Face Model, 
Average Case Analysis
\end{abstract}

\section{Introduction}

Search problems are concerned with the exploration of a domain, aiming to identify the location of a hidden object. 
More particularly, in evacuation-type problems where the domain is the unit disk, introduced recently by Czyzowicz et al. in~\cite{CGGKMP}, a group of mobile agents collectively search for a hidden item (the exit) placed on the perimeter of the disk, attempting to expedite the time it takes for the last agent to evacuate, i.e. reach the exit. As it was the case in~\cite{CGGKMP}, as well as in a series of follow-up improvements and problem variations, the main objective was the design of evacuation algorithms that minimize the \textit{worst case performance}. 
In contrast, real-life search-and-rescue operations, in which current problems find applications, are mostly concerned with good \textit{average performance}. Keeping also in mind that, in realistic search tasks, mobile agents do not have unbounded resources and at the same time it is imperative that the search terminates successfully with probability 1, one is motivated to study average case - worst case trade-offs for evacuation search problems. 

In this direction, we initiate the study of the traditional evacuation problem first introduced in~\cite{CGGKMP} from the perspective of average case analysis, which in our case is equivalent to designing efficient randomized algorithms. 
More specifically, we introduce problem \evacw\ which, at a high level, asks for efficient evacuation algorithms that perform well on average, given that their worst case performance does not exceed $w$ (which can be thought as the maximum time robots can operate, e.g. due to energy restrictions). The problem seems particularly challenging given that the worst case performance analysis of all known evacuation algorithms require tedious analysis, tailored to robots' trajectories, and followed by intense, computer-assisted calculations, which are always numerical. Our results pertain to new families of evacuation algorithms, whose worst case performance analysis can be done rigorously, and whose average case analysis requires again intense computer-assisted calculations, achieving average case - worst case trade-offs for a wide spectrum of values. Our computer-assisted calculations rely on a novel theoretical and unified approach to compute the cost of \textit{any evacuation algorithm} and for \textit{any placement of the hidden item} without relying on tedious analysis specific to robots' trajectories . Equipped with these techniques, we also verify, somehow surprisingly, that the best evacuation algorithms known prior to this work, designed to perform well in the worst case, \textit{do not perform well} for \evacw, adding this way to the motivation of our problem.

\subsection{Related Work}

In search problems, mobile agents, commonly referred as robots, aim to locate efficiently a hidden item placed in some geometric domain. 
Numerous search-types problems have been introduced and studied since the 60's, 
when two seminal papers on probabilistic search,~\cite{beck1964linear} and \cite{bellman1963optimal},
were concerned with minimizing the {\em expected time} to locate the item. 
The number of search-type variants, along with the difficulty of the underlying mathematical problems and the elegance of many results soon gave rise to what is known nowadays as Search Theory. 
Many of the variants have been classified in surveys, e.g.~\cite{dobbie1968survey}~\cite{benkoski1991survey}, while a number of books provide a comprehensive study for similar problems, e.g. see \cite{stone1975theory,ahlswede1987search,alpern2002theory} and the most recent~\cite{Alpern2013}.

Search-type problems have also been studied under the perspective of exploration 
in \cite{AH00,AKS02,DKP91,HIKK01} by a single robot, and in \cite{Y98,T01,B05} by multiple robots. 
Terrain mapping has been the main search task even in problems where exploration is not the primary objective, e.g. \cite{K94,mitchell2000geometric,PY}. 
Numerous other search-type problems have been introduced and classified as hide-and-seek and pursuit-evasion games, e.g. see~\cite{chung2011search,FT08,lidbetter2013hide,nahin2012chases}. 
Overall the list of search-type problems is enormous, and having given a representative list above, in what follows we refer only to the most relevant ones. 

The perception of a search-type problem as an evacuation problem, from a theoretical perspective, appeared almost a decade ago, e.g. in~\cite{baumann2009earliest,FGK10}.
The problem we study here is a direct follow-up to the evacuation problem \evac\ (a search-type problem) first introduced in \cite{CGGKMP}, which included many variants based on the number of robots and the communication model between them. 
In the variant \evac\ which is relevant to our work, two robots start from the center of the unit disk, while an exit is hidden somewhere on the perimeter. The robots move at speed 1, their perception of their environment is restricted to their location and they can exchange information only by meeting. The goal is to minimize the worst case evacuation time, i.e. the time it takes the last robot to reach the exit, over all exit placements. The upper bound of 5.73906 in~\cite{CGGKMP} was later improved to 5.628 in \cite{CGKNOV}, and further to the currently best known 5.625 in \cite{Watten2017}, while the best lower bound known for the problem is 5.255 due to~\cite{CGKNOV}. 

Since the introduction of \evac\ in~\cite{CGGKMP}, a number of variants emerged, focusing on different geometric domains, different number of robots and robots' specifications, different communication models etc. Examples include evacuation from the disk with more than 1 exits in the wireless model~\cite{DBLP:conf/icdcn/CzyzowiczDGKM16},
evacuation of a group of robots on a line~\cite{Groupsearch} (generalization of the celebrated Cow-Path problem ~\cite{baezayates1993searching}), 
evacuation in the presence of faulty robots in a line~\cite{isaacCzyzowiczGKKNOS16}
and in a disk~\cite{georgioudiskfaulty2017},
evacuation with advice~\cite{georgiou2017searching}
while more recently evacuation with combinatorial requirements on the robots that need to evacuate, e.g. 
\cite{GeorgiouKK16,GeorgiouKK17,CGKKKNOS18a,CGKKKNOS18b}.

\subsection{Outline of Our Results \& Paper Organization}

We initiate the study of evacuating 2 robots from the disk in the face-to-face model from an average case complexity perspective. In particular we introduce problem \evacw\ in which one tries to minimize the expected performance of randomized evacuation algorithms, subject to that the worst case performance does not exceed $w$. The problem is particularly challenging given that existing positive results, from a worst case complexity perspective, rely on tedious theoretical analysis tailored to algorithmic solutions, and supported by intense computer-assisted calculations. One of our main contributions is a unified and simple approach to quantify the performance of any evacuation algorithm and for any input. Equipped with this technique, we first verify that none of the previously known evacuation algorithms has good average case performance. Then, we introduce families of evacuation algorithms that have competitive average case performance, given that their worst case performance does not exceed $w$, for a wide range of $w$'s. Our results rely on rigorous and technical worst case performance analysis for the newly proposed algorithms. Building upon our new technique for efficiently evaluating the cost of evacuation algorithms for any input, we are able to numerically compute the average case performance of our algorithms, as well as to quantify formally the induced average case -worst case trade-offs.

In Section~\ref{sec: definition} we formally define \evacw\ and we give a high-level outline of the results we establish. 
Section~\ref{sec: evacuation times} contains one of our main contributions, which is a systematic process to compute the performance of any evacuation algorithm, given that robots' trajectories have convenient representations, described in Section~\ref{sec: trajectories description}. 
In Section~\ref{sec: benchmark algos} we analyze two benchmark algorithms for \evacw, as well as we motivate further the problem for certain values of $w$, among others showing, somehow surprisingly, that none of the previously proposed evacuation algorithms is efficient for our problem. 
Section~\ref{sec: new evac algos} describes our main contributions in the form of new families of evacuation algorithms. Then, in Section~\ref{sec: new evac algos wrs} we perform rigorous worst case analysis for all new algorithms and in Section~\ref{sec: new evac algos avg} we perform average case analysis, using our results from Section~\ref{sec: evacuation times} along with
 heavy computer-assisted calculations. In the same section, we also quantify formally all our results for \evacw. Finally, in Section~\ref{sec: conclusion} we conclude with some open problems. 


\section{Preliminaries}

\subsection{Problem Definition \& Main Results}
\label{sec: definition}
In \evac, two searchers (robots) start from the center of the unit disk. Moving at maximum speed 1, the two robots can move anywhere on the plane. Somewhere on the perimeter of the disk there is a hidden object (exit) that can be located by any of the robots only if the robot is co-located with the exit. 

The two robots do not see each other from distance, neither can they exchange messages unless they meet (face-to-face model), but they can agree in advance on each other's trajectories. A \textit{feasible evacuation} algorithm is determined by  the trajectories of the robots, in which eventually both robots reach the exit. 
For simplicity, we also require, w.l.o.g. that eventually any robot stays idle.
For convenience, we think that the center of the unit disk lies at the origin $(0,0)$ of a Cartesian system, and we denote by $\cycle(x)$ the point $(\coss{x}, \sinn{x})$, which will be referred to as an instance of \evac\ when the exit is placed at $\cycle(x)$.
Given instance $\cycle(x)$, we define the \textit{evacuation time} $\cost(x)$ of the feasible evacuation algorithm as the time it takes the last robot to reach the exit. 

In this work we are concerned with determining tradeoffs between the worst case and the average case performance (of uniform placements of the exit) of evacuation algorithms for \evac. More specifically, we say that an evacuation algorithm $\algo$ with evacuation cost $\cost(x)$ on instance $\cycle(x)$ is \textit{$(a,w)$-efficient} if 
\begin{align*}
& \avg{\algo}:= \E_{x\in [0,2\pi)}[\cost(x)] \leq a, \\
& \wrs{\algo}:= \sup_{x\in [0,2\pi)} \{ \cost(x) \} \leq w.
\end{align*}
where the expectation is with respect to the uniform distribution over $[0,2\pi)$. 
Special to our problem is that $\avg{\algo}$ can also be interpreted as the expected performance of a randomized algorithm based on $\algo$. 
Indeed, consider an algorithm which first performs a random rotation of the disk around the origin of angle $\theta$, where $\theta$ is chosen uniformly at random from $[0,2\pi)$, and then simulates $\algo$. This random step is equivalent to choosing a deployment point uniformly at random on the disk. Due to the symmetry of the domain, it is irrelevant where the adversary will place the unique exit, and hence the expected performance of this randomized algorithm equals $\avg{\algo}$. 

For algorithms $\algo(p)$ parameterized by parameter(s) $p$, the pair $\left(\avg{\algo(p)},\wrs{\algo(p)}\right)$ will correspond to a subset of $\reals^2$ (and a curve if $p$ is only one parameter), that we will refer to as the \textit{Efficient Frontier}. 
We also adopt an optimization perspective of the problem, and we introduce the following optimization problem \evacw\ on parameter $w$:

\begin{align}
\min~ & \frac1{2\pi} \int_0^{2\pi} \cost(x) dx  \tag{\evacw} \\
\textrm{s.t.} & ~~~~\cost(x) \leq w, ~~\forall x \in [0,2\pi). \notag
\end{align}

Due to an analysis we perform later, \evacw\ is interesting as long as 
$w_1\leq w\leq w_2$.
At a high level, values $w_1,w_2$ above are obtained from two benchmark algorithms, $\ben{1}, \ben{2}$, where 
$\wrs{\ben{1}}=w_1\approx 5.739, \avg{\ben{1}}=a_1\approx 5.1172, \wrs{\ben{2}}=w_2\approx 7.283, \avg{\ben{1}}=a_2 \approx 7.28319$,
hence $\ben{1}$ being efficient in worst case and inefficient in average case,
while $\ben{2}$ being efficient in average case and inefficient in worst case.
As it is common for \evac\ (and many follow-up variation problems) closed forms for the cost of best-solutions known do not exist, and upper and lower bounds are given numerically. Our results involve upper bounds for a continuous spectrum of parameters $w$ for problem \evacw. In particular we propose families of algorithms $\algo$ (over some parameters) so that, as their parameters vary, we obtain $\wrs{\algo}=w$ and $\avg{\algo}=g(w)$, for each $w\in [w_1,w_2]$. The curve $(g(w),w)$ summarizing our results is depicted in Figure~\ref{fig: entire efficient frontier}, and it is later quantified in Theorem~\ref{thm: main thm upper bound} (see Section~\ref{sec: new evac algos wrs}). 

\begin{figure}[h!]
  \centering
  \includegraphics[width=0.7\linewidth]{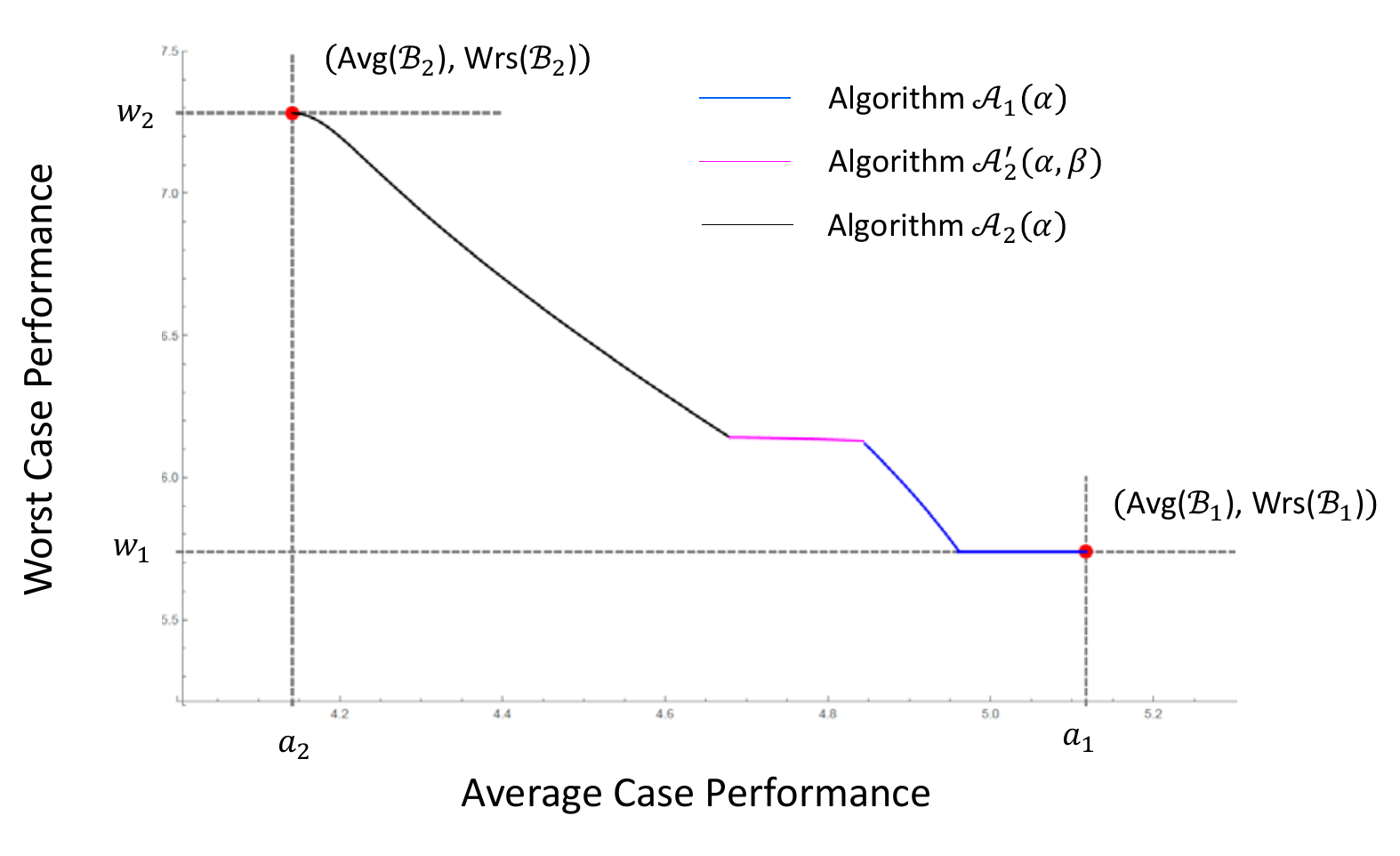}
\caption{
Illustration of the performance of our solution to \evacw, for every $w \in [w_1,w_2]$. Depicted curve corresponds to parametric curve $(g(w),w)$, where $w, g(w)$ are the worst case performance and average case performance of three different families of evacuation algorithms $\algo_1, \algo_2', \algo_2$, discussed formally in Section~\ref{sec: new evac algos}.
Note that the magenta curve is not a straight line and, as we show next, induces decreasing worst case performance (as the average  case performance increases).
}
\label{fig: entire efficient frontier}
\end{figure}

Note that an $(a,w)$-efficient algorithm gives a solution of value $a$ for \evacw. Our approach to prove Theorem~\ref{thm: main thm upper bound} is to define families of evacuations algorithms $\algo(p)$ parameterized by parameter(s) $p$. We will prove that these algorithms are $(u(p),v(p))$-efficient for some functions $u(p), v(p)$, and in particular the evaluation of the worst case performance will be exact and monotone in $p$, while the computation of $v(p)$ will be computer-assisted. Then we will set $p=v^{-1}(w)$, and will be able to describe the average case performance as a function of $w$ as $g(w):=u(v^{-1}(w))$.

\subsection{Computing Evacuation Times}
\label{sec: evacuation times}

For any feasible evacuation algorithm, we define by $\se(x)$, the first time that $\cycle(x)$ is visited by any robot. Clearly, when a robot, say $\rob{1}$ locates the exit at $\cycle(x)$, it may attempt to catch $\rob{2}$ while moving along $\rob{2}$'s trajectory along the shortest line segment, say of length $\ev(x)$. Once robots meet, they return together to $\cycle(x)$, inducing total evacuation cost
$
\cost(x) = 1+\se(x) + 2\ev(x). 
$

All existing results for \evac, from a worst case complexity perspective, rely on numerical computer-assisted estimation of $\sup_x \cost(x)$, after identifying properties of the maximizer. In this section, we elevate existing arguments, and we propose a generalized and unified approach for computing $\cost(x)$, for any $x$ and for any robots' trajectories. 
For the sake of formality, as well as for practical purposes, 
robots' trajectories will be defined by parametric functions $\mathcal{F}(t)=(f(t), g(t))$, where $f,g:\reals\mapsto \reals$ are continuous and piecewise differentiable. 
In particular, search protocols for the two robots will be given by trajectories 
$\rob{1}(t),\rob{2}(t)
$, 
where $\rob{i}(t)$ will denote the position of robot $\rob{i}$ at time $t\geq 0$. Therefore, any evacuation algorithm will be identified by a tuple $(\rob{1},\rob{2})$. To simplify notation, we will only determine the trajectories from the moment the two robots reach the perimeter of the circle, and until the entire circle is searched, and we will silently assume that robots stay put after exploration is over. 

\begin{lemma}
\label{lem: evac compute}
Consider instance $\cycle(x)$ of \evac, and suppose that for a feasible evacuation algorithm $(\rob{1},\rob{2})$, robot 1 is the first robot that finds the exit. Then $\ev(x) = \bar{t} - \se(x)$, where $\bar{t}=\bar{t}(x)$ is the smallest root, no less than $\se(x)$, of function 
\begin{equation}
\label{equa: evac as a root}
h_x(t):= 
\norm{
\rob{2}(t) - \rob{1}(\se(x))
} 
-t+\se(x).
\end{equation}
\end{lemma}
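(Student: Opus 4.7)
The plan is to translate the geometric interception between the two robots into the algebraic root-finding condition $h_x(\bar{t}) = 0$. Since $\rob{1}$ is the first robot to locate the exit, by the definition of $\se(x)$ we have $\rob{1}(\se(x)) = \cycle(x)$. As described in the paragraph preceding the lemma, from that moment on $\rob{1}$ intercepts $\rob{2}$ by traveling along a straight line segment at unit speed, while $\rob{2}$ keeps following its planned trajectory (the face-to-face model precludes any remote notification).

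First I would let $\tau := \se(x) + \ev(x)$ be the rendezvous time and let $q$ be the rendezvous point. Since $\rob{2}$ has not been intercepted before time $\tau$, we have $q = \rob{2}(\tau)$. Since $\rob{1}$ traverses the segment from $\rob{1}(\se(x))$ to $q$ in time $\tau - \se(x)$ at unit speed, the Euclidean length of that segment equals $\tau - \se(x)$. Combining these two observations gives
\begin{equation*}
\norm{\rob{2}(\tau) - \rob{1}(\se(x))} = \tau - \se(x),
\end{equation*}
i.e. $h_x(\tau) = 0$, with $\tau \geq \se(x)$ by construction.

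It remains to identify $\tau$ with $\bar{t}$, the smallest such root. For any $t' \in [\se(x),\tau]$ with $h_x(t') = 0$, the point $\rob{2}(t')$ lies at Euclidean distance $t' - \se(x)$ from $\rob{1}(\se(x))$, and is therefore reachable by $\rob{1}$ along a straight segment traversed at unit speed in exactly time $t' - \se(x)$. Hence if a root $t' < \tau$ existed, $\rob{1}$ could have intercepted $\rob{2}$ strictly earlier by aiming toward $\rob{2}(t')$, contradicting the definition of $\ev(x)$ as the length of the shortest interception segment. I anticipate the main (mild) obstacle to be largely expository: one must justify that $\rob{2}$ is genuinely following its prescribed trajectory all the way up to the moment of contact, which is exactly the face-to-face contact assumption and is what makes the equation $h_x(\tau) = 0$ well-posed. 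The existence of a root no less than $\se(x)$ is in turn guaranteed by the feasibility of the algorithm (the two robots are assumed to meet) together with the continuity of $h_x$, so the infimum in the definition of $\bar{t}$ is attained.
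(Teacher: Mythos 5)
Your proposal is correct and follows essentially the same route as the paper: both arguments show that the actual interception time is a root of $h_x$ and identify it with the smallest root at least $\se(x)$, using that $\ev(x)$ is by definition the length of the \emph{shortest} catching segment. The only cosmetic difference is ordering — the paper first establishes existence of a root via continuity of $h_x$ and the eventual idleness of $\rob{2}$ (so $h_x$ becomes negative for large $t$), then asserts the smallest root is the meeting time, whereas you derive existence from the rendezvous itself and make the minimality step slightly more explicit via the contradiction with a shorter interception segment.
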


\begin{proof}
First observe that $h_x(t)$ is continuous, and assuming that the two robots are not co-located when the exit is found, we have $h_x(\se(x))>0$. At the same time, since the evacuation algorithm is feasible, $\rob{2}(t)$ is eventually a constant, and hence for big enough $t$ we have that $h_x(t)$ becomes eventually negative. By the mean value theorem, there is $t_0>0$ for which $h_x(t_0)=0$. 

Now consider the smallest positive root $\bar{t}$ of $h_x$, no less than $\se(x)$. At time $\bar{t}$,  $\rob{2}$ is located at point $\rob{2}(\bar{t})$, and it is 
$\norm{
\rob{2}(\bar t) - \rob{1}(\se(x))
}$ 
away from the location $\cycle(x)$ of the discovered exit. At the same time, $\rob{1}$ moves with speed 1 along the shortest path to catch $\rob{2}$ in her trajectory. Hence  it takes $\rob{1}$ some $\bar{t} - \se(x)$ extra time from the moment the exit is found until she reaches point $\rob{2}(\bar{t})$. By definition we have $\rob{1}(\bar{t})=\rob{2}(\bar{t})$, and therefore $\ev(x) = \bar{t} - \se(x)$ as claimed. 
\qed
\end{proof}

For some special trajectories, $\ev(x)$ admits a simpler description that we describe next. Before that, we introduce some notation pertaining to a function $\delta:[0,\pi]\mapsto \reals_+$, which we widely use in the remaining of the paper:
\begin{equation}
\label{equa: def of d_a}
\mathcal \delta(x) := \textrm{ unique non-negative root (w.r.t. $d$) of~~~`` } 2\sinn{x+\frac{d}{2}} = d \textrm{ ''.}
\end{equation}
To simplify notation, we will also abbreviate $\delta(x)$ by $\delta_x$. The fact that $\delta_x$ is well defined follows easily from the monotonicity of $\sin$ in $[0,\pi]$.

\begin{lemma}
\label{lem: simple catching time}
For some instance $\cycle(x)$ of \evac, suppose that for a feasible evacuation algorithm $(\rob{1},\rob{2})$, $\rob{1}$ is the founder of the exit, say at time $t_0=\se(x)$. Assume that both $\rob{1}(t_0), \rob{2}(t_0)$ lie on the circle at arc distance $2\alpha$, and suppose that $\rob{2}$'s movement is along the perimeter of the circle toward the complementary arc of length $2\pi-\alpha$. Then, 
$\ev(x)=\delta_\alpha$.
\end{lemma}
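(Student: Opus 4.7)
The plan is to apply Lemma~\ref{lem: evac compute} directly to the trajectories described, and then show that the specialized version of $h_x$ from (\ref{equa: evac as a root}) coincides with the equation whose unique non-negative root defines $\delta_\alpha$ in (\ref{equa: def of d_a}).

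First, I would parameterize $\rob{2}$'s position after the exit is found. Since $\rob{1}(t_0)=\cycle(x)$ with $t_0:=\se(x)$, and $\rob{2}(t_0)$ lies on the unit circle at arc distance $2\alpha$ from $\cycle(x)$, the hypothesis that $\rob{2}$ continues at unit speed along the perimeter away from $\rob{1}(t_0)$ implies that for every $d\geq 0$ the point $\rob{2}(t_0+d)$ lies at arc distance $2\alpha+d$ from $\cycle(x)$. Then I would invoke the chord-length identity: two points on the unit circle at arc distance $\ell$ are at Euclidean distance $2\sinn{\ell/2}$. Specializing, $\norm{\rob{2}(t_0+d)-\rob{1}(t_0)}=2\sinn{\alpha+d/2}$. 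Plugging this into (\ref{equa: evac as a root}) yields $h_x(t_0+d)=2\sinn{\alpha+d/2}-d$, which is exactly the function whose unique non-negative root is $\delta_\alpha$ by (\ref{equa: def of d_a}).

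Finally, Lemma~\ref{lem: evac compute} calls for the \emph{smallest} root of $h_x$ no less than $\se(x)$, so I would verify that $\delta_\alpha$ serves this role via a monotonicity argument: $h_x(t_0)=2\sinn{\alpha}\geq 0$, while the derivative with respect to $d$ equals $\coss{\alpha+d/2}-1\leq 0$, so $d\mapsto h_x(t_0+d)$ is non-increasing and its unique zero is automatically the smallest. Applying Lemma~\ref{lem: evac compute} then gives $\ev(x)=\bar{t}-\se(x)=\delta_\alpha$.

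The proof presents no serious obstacle, as the whole statement is essentially a substitution into the general formula of Lemma~\ref{lem: evac compute}. The one mildly delicate point is boundary behaviour for $\alpha$ near $\pi$, where one should verify that $\rob{2}$ does not exhaust the stated complementary arc before being caught; this is automatic from the feasibility hypothesis on the evacuation algorithm, since otherwise the two robots could never meet along the prescribed trajectories.
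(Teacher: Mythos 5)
Your proposal is correct and takes essentially the same route as the paper, whose proof is just the one-line remark that the substitution $t-\se(x)=d$ turns $h_x$ into the defining equation of $\delta_\alpha$; you simply make explicit the chord-length identity $\norm{\rob{2}(t_0+d)-\rob{1}(t_0)}=2\sinn{\alpha+d/2}$ that justifies this. Your additional monotonicity check that the root is indeed the smallest one (and the remark on feasibility near $\alpha=\pi$) fills in details the paper leaves implicit, but does not change the argument.
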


\begin{proof}
The lemma follows by applying transformation $t-\se(x)=d$ in the definition of $h_x(t)$ in 
Lemma~\ref{lem: evac compute}, so that $\ev(x) = t-\se(x)=d$. 
\qed
\end{proof}

We are ready to conclude with a corollary that will be handy for computing evacuation times numerically, and without relying on excessive case analysis, as it was the case before. 

\begin{corollary}
\label{cor: cost simplified}
Consider feasible evacuation algorithm $(\rob{1},\rob{2})$ for \evac. For any instance $\cycle(x)$ for which $\rob{1}$ is the exit founder, the evacuation cost can be computed as 
$
\cost(x) = 1+2\bar{t} -\se(x),
$
where $\bar{t}=\bar{t}(x)$ is the smallest root, at least $\se(x)$, of 
$h_x(t):= 
\norm{
\rob{2}(t) - \rob{1}(\se(x))
} 
-t
+\se(x)
$.
\end{corollary}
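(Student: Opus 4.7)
The plan is to combine Lemma~\ref{lem: evac compute} with the evacuation cost decomposition already stated in the paragraph immediately preceding that lemma. That paragraph identifies three time contributions: first, the unit of time that each robot needs to travel from the origin to the perimeter; second, the time $\se(x)$ that the founder $\rob{1}$ spends along its perimeter trajectory before reaching $\cycle(x)$; and third, the extra $\ev(x)$ needed for $\rob{1}$ to intercept $\rob{2}$ along a shortest segment. What I would make explicit is a fourth contribution, the return leg. Because $\rob{1}$ travels at unit speed along a straight segment from $\cycle(x)$ to the meeting point $\rob{2}(\bar t)$, that meeting point lies at Euclidean distance exactly $\ev(x)$ from $\cycle(x)$; hence the joint return trip to the exit also takes $\ev(x)$. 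Summing gives
\[
\cost(x) \;=\; 1 + \se(x) + 2\,\ev(x),
\]
which is precisely the identity used as a black box in the lead-up to the corollary.

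To finish I would invoke Lemma~\ref{lem: evac compute}, which asserts $\ev(x) = \bar t - \se(x)$ with $\bar t$ being the smallest root of $h_x$ that is at least $\se(x)$. Substituting yields
\[
\cost(x) \;=\; 1 + \se(x) + 2\bigl(\bar t - \se(x)\bigr) \;=\; 1 + 2\bar t - \se(x),
\]
as claimed. Since the function $h_x$ is defined identically in the lemma and the corollary, the characterization of $\bar t$ transfers verbatim, and no further structural assumptions are needed beyond what is already bundled into the word ``feasible'' and the hypothesis that $\rob{1}$ is the exit founder.

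The only mild obstacle, more of a sanity check than a technical step, is to confirm the return-leg claim that the meeting point sits at Euclidean distance $\ev(x)$ from $\cycle(x)$. This follows immediately from $\rob{1}$'s trajectory between time $\se(x)$ and time $\bar t$ being a shortest straight segment traversed at unit speed, so the argument does not require any additional case analysis about geometry or about whether the meeting happens on the disk, on the perimeter, or elsewhere in the plane.
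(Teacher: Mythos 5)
Your proposal is correct and follows exactly the route the paper intends: the corollary is an immediate substitution of $\ev(x) = \bar t - \se(x)$ from Lemma~\ref{lem: evac compute} into the cost decomposition $\cost(x) = 1+\se(x)+2\ev(x)$ stated just before that lemma. The extra sanity check on the return leg (that the meeting point lies at Euclidean distance $\ev(x)$ from the exit, so the joint return also costs $\ev(x)$) is a reasonable explicit justification of what the paper treats as self-evident.
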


\subsection{Trajectories' Description}
\label{sec: trajectories description}

Robots' trajectories will be described in phases.
We will always omit the ``deployment phase'', i.e. the movement from the circle center to its perimeter, and we will only describe the trajectories from the moment robots start searching the circle. 
In each phase, robot $\rob{}$, will be moving between two explicit points, either along an arc, or along a line segment (chord of an arc), see Observations~\ref{obs: move circle} and~\ref{obs: move line} below.
We will summarize robot's trajectories in tables of the following format. 

\begin{center}
\begin{tabular}{l || clc}
\textit{Robot}		 	&	Phase \# &  \textit{Trajectory} & \textit{Duration}\\
\hline 
$\rob{}$	&	1					&	$\rob{}(t)$ & $t_1$ \\
				&	2					&	$\rob{}(t)$ &  $t_2$ \\
				&	$\vdots$				&	&  $\vdots$	\\
\hline			
\end{tabular}
\end{center}

In order to ease notation, trajectory $\rob{}(t)$ of phase $i$ will be described with parametric equations as if the time is reset to 0 after time $t_0 + t_1+t_2+ \ldots +t_{i-1}$, where $t_0=1$ (this is the time that robots reach the circle). The two fundamental trajectory components are movements along arcs and movements along line segments. 

\begin{observation}\label{obs: move circle}
Let $b \in [0,2\pi)$ and $\sigma \in \{-1,1\}$. The trajectory of an object moving at speed 1 on the perimeter of a unit circle with initial location $\cycle(b)$ is given by the parametric equation 
$
\cycle(\sigma t+b)=(\coss{\sigma t+b}, \sinn{\sigma t+b}).
$
If $\sigma=1$ the movement is counter-clockwise (ccw), and clockwise (cw) otherwise. 
\end{observation}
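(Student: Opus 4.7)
The plan is to verify each of the four assertions the observation makes about the parametric curve $\cycle(\sigma t + b)$: (i) it traces the unit circle, (ii) its value at $t=0$ equals the prescribed initial location $\cycle(b)$, (iii) its speed equals $1$, and (iv) the sign of $\sigma$ determines orientation in the stated way. All four reduce to direct calculations using the Pythagorean identity and elementary calculus, so there is no substantive obstacle; the ``proof'' is essentially a definitional check that fixes the convention to be used throughout the paper.

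First, I would observe that for every $t \in \reals$, the point $\cycle(\sigma t + b) = (\coss{\sigma t + b}, \sinn{\sigma t + b})$ satisfies $\coss{\sigma t + b}^2 + \sinn{\sigma t + b}^2 = 1$, so it lies on the unit circle. Setting $t=0$ gives $(\coss{b}, \sinn{b}) = \cycle(b)$, confirming the initial position. Then I would differentiate component-wise to get the velocity vector
\[
\frac{d}{dt}\cycle(\sigma t + b) \;=\; \bigl(-\sigma \sinn{\sigma t + b},\; \sigma \coss{\sigma t + b}\bigr),
\]
whose Euclidean norm is $|\sigma|\sqrt{\sinn{\sigma t + b}^2 + \coss{\sigma t + b}^2} = |\sigma| = 1$, since $\sigma \in \{-1,1\}$. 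This establishes that the object moves at unit speed, as required.

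For the orientation statement, I would note that on the unit circle, the angle $\theta \mapsto (\coss{\theta}, \sinn{\theta})$ parametrizes the counter-clockwise direction with respect to the standard orientation of the plane. When $\sigma = 1$ the angular argument $\sigma t + b = t + b$ is strictly increasing in $t$, so the object moves ccw; when $\sigma = -1$ the argument $-t+b$ is strictly decreasing, so the object moves cw. This can also be read off directly from the velocity vector above, whose rotation by $90^\circ$ from the outward radial direction $\cycle(\sigma t + b)$ is counter-clockwise precisely when $\sigma = 1$.

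Since each of the four claims has been verified by a short computation, and since no nontrivial case analysis or estimation is involved, the observation follows. I do not anticipate any real difficulty; the only care required is to make sure the convention for $\cycle(\cdot)$ fixed in Section~\ref{sec: definition} is consistent with the standard mathematical orientation so that the labels ``ccw'' and ``cw'' match the sign of $\sigma$ as stated.
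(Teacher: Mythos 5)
Your proposal is correct and matches the paper's own (omitted, commented-out) argument: both verify the initial position, membership on the unit circle, and unit speed by differentiating component-wise and using $\sigma^2=1$ with the Pythagorean identity. The only addition is your explicit justification of the ccw/cw orientation, which the paper leaves implicit; this is fine and trivial.
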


\ignore{
\begin{proof}
Clearly, $\ci{b,0}=\ki{b}$. Also, it is easy to see that $\norm{\ci{b,t}}=1$, i.e. the object is moving on the perimeter of the unit circle. Lastly, 
$$
\left(\frac{d}{dt} \coss{\sigma t+b}\right)^2 + 
\left(\frac{d}{dt} \sinn{\sigma t+b}\right)^2 
=
\sigma^2 \left(- \sinn{\sigma t+b}\right)^2 + 
\sigma^2 \left( \coss{\sigma t+b}\right)^2 =1, 
$$
so the claim follows by Lemma~\ref{lem: unit speed}. 
\qed
\end{proof}
}

\begin{observation}\label{obs: move line}
Consider distinct points $A=(a_1,a_2), B=(b_1,b_2)$ in $\reals^2$. The trajectory of a speed 1 object moving along the line passing through $A,B$ and with initial position $A$ is given by the parametric equation 
$
\li{A,B,t}:=\left(
\frac{b_1-a_1}{\norm{A-B}}t+a_1,
\frac{b_2-a_2}{\norm{A-B}}t+a_2
\right).
$
\end{observation}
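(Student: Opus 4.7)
The plan is to verify the three defining properties: correct initial position, containment in the line through $A$ and $B$, and unit speed. This mirrors the structure of the (commented-out) proof of Observation~\ref{obs: move circle}, and should invoke the same underlying ``unit speed'' lemma characterizing parametric curves $\mathcal{F}(t)=(f(t),g(t))$ of speed $1$ as those whose derivative has Euclidean norm identically equal to $1$.

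First, I would check the initial condition by direct substitution: plugging $t=0$ into $\li{A,B,t}$ yields $(a_1,a_2)=A$, as required. Second, I would show the trajectory lies on the line through $A$ and $B$. Rewriting the parametric equation as
\[
\li{A,B,t} - A = \frac{t}{\norm{A-B}}\,(B-A),
\]
makes it manifest that $\li{A,B,t}$ is an affine combination of $A$ and $B$, and in particular lies on the unique line through those two distinct points. As a sanity check, $t=\norm{A-B}$ gives $\li{A,B,t}=B$, confirming that the parametrization traverses the segment from $A$ toward $B$ in the expected direction.

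Third, I would verify that the motion has unit speed by differentiating coordinate-wise. Since $A\neq B$, the denominator $\norm{A-B}$ is nonzero, and
\[
\frac{d}{dt}\li{A,B,t} = \left(\frac{b_1-a_1}{\norm{A-B}},\,\frac{b_2-a_2}{\norm{A-B}}\right).
\]
Computing the squared norm of this derivative gives
\[
\frac{(b_1-a_1)^2 + (b_2-a_2)^2}{\norm{A-B}^2} = \frac{\norm{A-B}^2}{\norm{A-B}^2} = 1,
\]
so the derivative has norm $1$ at every $t$. Applying the unit-speed characterization concludes the argument.

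No real obstacle arises here: the observation is a routine verification, and the only thing to be careful about is the hypothesis $A\neq B$, which is explicitly assumed and is exactly what ensures the parametrization is well-defined. The proof is at most a few lines, and its role in the paper is simply to fix the notation $\li{A,B,t}$ for later trajectory descriptions in phase tables.
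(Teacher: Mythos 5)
Your proposal is correct and follows essentially the same route as the paper's own (commented-out) proof: verify the initial position, note the parametrization lies on the line through $A$ and $B$, and confirm unit speed by showing the derivative has norm $1$, appealing to the same underlying unit-speed lemma. Nothing is missing; this is exactly the routine verification the authors intended.
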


\ignore{
\begin{proof}
It is immediate that the parametric equation corresponds to a line. Also, it is easy to see that $\li{A,B,0}=A$ and $\li{A,B,\norm{A-B}}=B$, i.e. the object starts from $A$, and eventually visits $B$. As for the object's speed, we calculate
$$
\left(\frac{d}{dt} \left(\frac{b_1-a_1}{\norm{A-B}}t+a_1\right)\right)^2 + 
\left(\frac{d}{dt} \left(
\frac{b_2-a_2}{\norm{A-B}}t+a_2
\right)
\right)^2 
=
\left(
\frac{b_1-a_1}{\norm{A-B}}
\right)^2 + 
\left(
\frac{b_2-a_2}{\norm{A-B}}
\right)^2 =1
$$
so, by Lemma~\ref{lem: unit speed}, the speed is indeed 1. 
\qed
\end{proof}
}

Finally, the analysis of our algorithms' trajectories will give rise to a number of constants.
For the reader's convenience, we list here the numerical values of the most common constants that will be encountered later;
$
w_1 \approx 5.73906,
w_0 \approx 6.11953,
w' \approx 6.12851,
w_2 \approx 7.28319,
\alpha' \approx 1.15468,
\bar \alpha \approx 1.54419,
\beta' \approx 0.0241653,
\beta_0 \approx 0.04388.
$
All constants are formally defined when they are first introduced.

\section{Two Benchmark Algorithms \& Motivation}
\label{sec: benchmark algos}
In this section we describe two benchmark algorithms for \evac, as well as perform average case analysis to algorithms previously proposed in the literature. The reader may consult Figure~\ref{fig: benchmark} for the algorithms analyzed in this section. 
\begin{figure}[h!]
  \centering
  \includegraphics[width=1.0\linewidth]{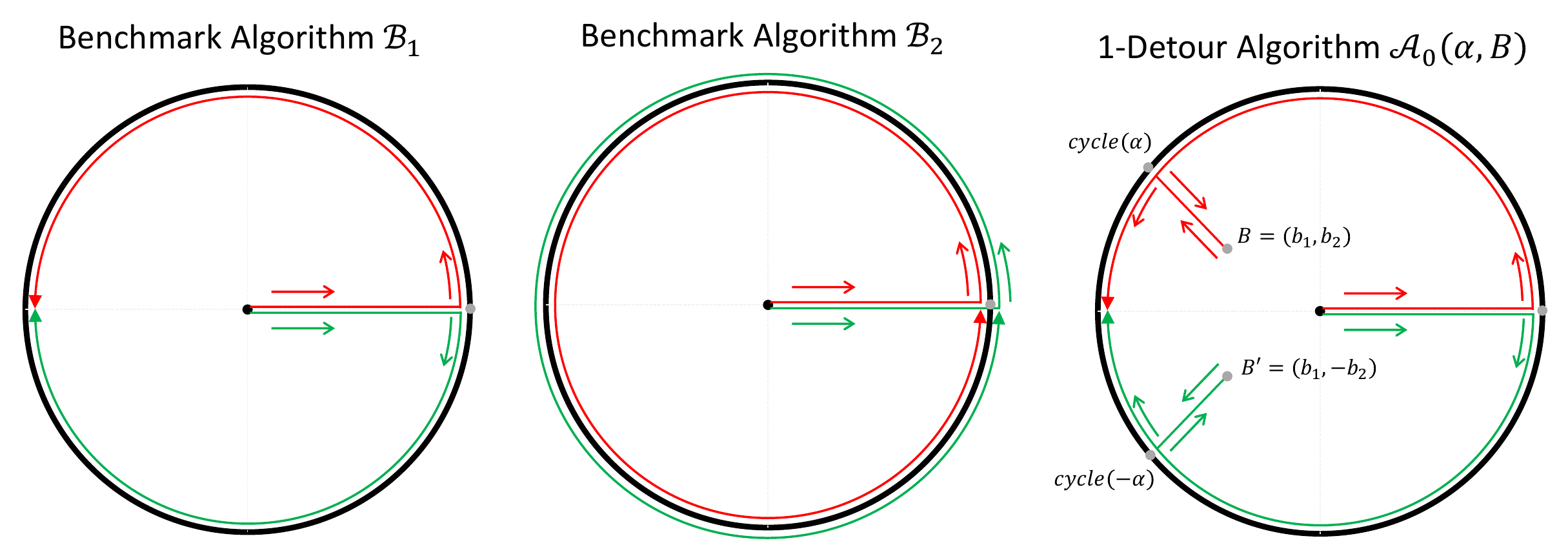}
\caption{Robots' Trajectories for algorithms $\ben{1}, \ben{2}, \algo_0$. The depicted trajectories show the search of the circle, and not the evacuation step that is performed once the exit is found.
}
\label{fig: benchmark}
\end{figure}
Czyzowicz et al.~\cite{CGGKMP} were the first to introduce an evacuation algorithm for \evac, which we denote here by $\ben{1}$ (see Figure~\ref{fig: benchmark} on the left). 

\begin{definition}[Benchmark Algorithm $\ben{1}$]
For all $t \in [0,\pi]$,  $\rob{1}(t)=cycle(t)$ and $\rob{2}(t)=cycle(-t)$.
\end{definition}

\begin{observation}
\label{obs: ben1 performance}
Benchmark Algorithm $\ben{1}$ is $(5.1172,5.73906)$-efficient.
\end{observation}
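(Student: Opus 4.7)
The plan is to apply the catching-time identity of Lemma~\ref{lem: simple catching time} and then reduce both the worst-case and average-case computations to elementary trigonometric expressions via a single substitution.

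By the symmetry of the two trajectories across the $x$-axis, it suffices to analyze instances $\cycle(\theta)$ with $\theta \in [0,\pi]$. For such $\theta$, robot $\rob{1}$ is the founder of the exit at time $\se(\theta) = \theta$ (measured from deployment), while $\rob{2}$ at that moment is at $\cycle(-\theta)$ and continues clockwise along the long arc toward $\cycle(-\pi)$. This matches the hypothesis of Lemma~\ref{lem: simple catching time} with $\alpha = \theta$, provided we verify that $\rob{2}$ has not yet reached its terminal point $\cycle(-\pi)$ when caught. To check this, I will use the substitution $u := \theta + \delta_\theta/2$, under which the defining equation $2\sinn{\theta + \delta_\theta/2} = \delta_\theta$ becomes
\[
\theta \;=\; u - \sinn{u}, \qquad \delta_\theta \;=\; 2\sinn{u},
\]
with $u$ ranging monotonically over $[0,\pi]$ as $\theta$ does. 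Then $\theta + \delta_\theta = u + \sinn{u} \leq \pi$ for all $u \in [0,\pi]$ (equality only at $u=\pi$), so Lemma~\ref{lem: simple catching time} applies and $\ev(\theta) = \delta_\theta$, giving
\[
\cost(\theta) \;=\; 1 + \theta + 2\delta_\theta \;=\; 1 + u + 3\sinn{u}.
\]

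For $\wrs{\ben{1}}$, I will maximize $1 + u + 3\sinn{u}$ over $u \in [0,\pi]$. The unique critical point satisfies $1 + 3\coss{u} = 0$, giving $u = \arccos(-1/3)$ and $\sinn{u} = 2\sqrt{2}/3$, so $\wrs{\ben{1}} = 1 + \arccos(-1/3) + 2\sqrt{2} \approx 5.73906$, as claimed.

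For $\avg{\ben{1}}$, the same symmetry gives $\avg{\ben{1}} = \frac{1}{\pi}\int_0^\pi \cost(\theta)\, d\theta$, and applying the change of variables $\theta = u - \sinn{u}$ with $d\theta = (1 - \coss{u})\, du$ yields
\[
\avg{\ben{1}} \;=\; \frac{1}{\pi}\int_0^\pi (1 + u + 3\sinn{u})(1 - \coss{u})\, du.
\]
Expanding the integrand and evaluating the six resulting standard integrals (notably $\int_0^\pi u\coss{u}\, du = -2$ and $\int_0^\pi \sinn{u}\coss{u}\, du = 0$) should produce $\pi + \pi^2/2 + 8$, hence the closed form $\avg{\ben{1}} = 1 + \pi/2 + 8/\pi \approx 5.1172$. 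The only nonroutine step is the bound $u + \sinn{u} \leq \pi$ on $[0,\pi]$ needed to invoke Lemma~\ref{lem: simple catching time}; everything else is routine calculus in the $u$-parameterization.
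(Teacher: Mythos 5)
Your proof is correct, and while it shares the paper's skeleton (reduce to $x\in[0,\pi]$ by symmetry, note $\se(x)=x$, and invoke Lemma~\ref{lem: simple catching time} to get $\cost(x)=1+x+2\delta_x$), it diverges in a genuinely useful way at the point where the paper resorts to software-assisted numerics. Your substitution $u=x+\delta_x/2$ converts the implicitly defined $\delta_x$ into the explicit parameterization $x=u-\sinn{u}$, $\delta_x=2\sinn{u}$, which is a strictly increasing bijection of $[0,\pi]$ onto itself, so both the supremum and the integral become elementary: the maximum of $1+u+3\sinn{u}$ is attained at $\coss{u}=-1/3$, giving the closed form $\wrs{\ben{1}}=1+\arccos(-1/3)+2\sqrt{2}\approx 5.73906$ (and, reassuringly, $x^*=u^*-\sinn{u^*}\approx 0.96782$, recovering $\alpha_0$ of Lemma~\ref{lem: monotonicity ben1}), while the change of variables yields $\avg{\ben{1}}=1+\pi/2+8/\pi\approx 5.11728$; I have checked the six elementary integrals and both closed forms agree with the paper's stated numerics. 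Your verification that $x+\delta_x=u+\sinn{u}\le\pi$ (immediate from $\frac{d}{du}(u+\sinn{u})=1+\coss{u}\ge 0$ and equality at $u=\pi$) is a welcome addition: it is exactly the hypothesis needed for Lemma~\ref{lem: simple catching time} to apply (namely that $\rob{2}$ is still searching the perimeter when caught), and the paper leaves it implicit. The trade-off is that the paper's numerical pipeline is the one that scales to the later algorithm families where no such closed form is available, whereas your argument gives an exact, fully rigorous evaluation for this particular benchmark.
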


\begin{proof}[Observation~\ref{obs: ben1 performance}]
Note that it takes time $\pi$ to search the entire circle, and that the two trajectories are symmetric with respect to horizontal axis. Therefore, we may assume that the instance $\cycle(x)$ satisfies $x\in [0,\pi]$. 

Clearly, for any such $x$, we have that $\se(x)=x$. 
By Lemma~\ref{lem: simple catching time}, we have that 
$\cost(x)=1+\se(x)+2\ev(x) =1+x+2\delta_x$. 
Numerical calculations (software assisted) show that 
\begin{align*}
&\wrs{\ben{1}}= \sup_{x\in [0,\pi]} \{ \cost(x) \} = 
 \sup_{x\in [0,\pi]} \{ 1+x+2\delta_x  \} \approx 5.73906 
, \\
&\avg{\ben{1}} = \E_{x\in [0,\pi]}[\cost(x)] 
=\frac1{\pi} \int_{x=0}^{\pi}\left(1+x+2\delta_x \right)dx \approx 5.1172.
\end{align*}
\qed
\end{proof}

$\ben{1}$ should be understood as being efficient in the worst case, but inefficient on average. 
The claim becomes transparent by introducing the following \textit{naive} algorithm for \evac\ that we depict in the middle of Figure~\ref{fig: benchmark}. 
\begin{definition}[Benchmark Algorithm $\ben{2}$]
For each $t\in [0, 2\pi]$, $\rob{1}(t)=\rob{2}(t)=cycle(t)$.
\end{definition}

\begin{observation}
\label{obs: ben2 performance}
Benchmark Algorithm $\ben{2}$ is $(1+\pi, 1+2\pi)$-efficient.
\end{observation}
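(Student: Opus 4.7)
The plan is to exploit the fact that under $\ben{2}$ the two robots follow identical trajectories, and therefore remain co-located at every moment after reaching the perimeter. This makes the analysis a one-liner compared to that of $\ben{1}$.

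First, I would restrict attention, without loss of generality, to exit placements $\cycle(x)$ with $x\in[0,2\pi)$, since this already covers the entire perimeter. For any such $x$, I would observe that $\rob{1}(t)=\rob{2}(t)=\cycle(t)$ implies $\se(x)=x$: the exit is first reached at the exact moment the common trajectory passes through $\cycle(x)$. Since both robots are at the same point at that instant, one of them (say $\rob{1}$) is declared the ``founder'' and the other is already co-located, so no catch-up is needed. Formally, applying Corollary~\ref{cor: cost simplified}, the function $h_x(t)=\|\rob{2}(t)-\rob{1}(\se(x))\|-t+\se(x)$ has $h_x(\se(x))=0$, so the smallest root at least $\se(x)$ is $\bar t(x)=\se(x)=x$, giving $\ev(x)=0$ and
\[
\cost(x)=1+\se(x)+2\ev(x)=1+x.
\]

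Given this closed form, the worst case and average case follow from elementary calculus. For the worst case,
\[
\wrs{\ben{2}}=\sup_{x\in[0,2\pi)}(1+x)=1+2\pi.
\]
For the average case under the uniform distribution on $[0,2\pi)$,
\[
\avg{\ben{2}}=\frac1{2\pi}\int_0^{2\pi}(1+x)\,dx=1+\pi.
\]
Together these two identities establish that $\ben{2}$ is $(1+\pi,1+2\pi)$-efficient.

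There is no genuine obstacle here; the only thing worth being careful about is to note explicitly that co-location of the two robots throughout the search phase eliminates the ``meeting'' component $\ev(x)$ that dominates the analysis for every other algorithm in the paper. This is precisely why $\ben{2}$ serves as the benchmark that is optimal on average but pessimal in the worst case, contrasting cleanly with $\ben{1}$ from Observation~\ref{obs: ben1 performance}.
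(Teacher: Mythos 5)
Your proof is correct and follows essentially the same route as the paper: co-location of the two robots gives $\ev(x)=0$ and $\cost(x)=1+x$, from which the supremum $1+2\pi$ and the uniform average $1+\pi$ follow immediately. Your explicit remark that $h_x(\se(x))=0$ (so the smallest root of $h_x$ at least $\se(x)$ is $\se(x)$ itself) is a nice touch the paper leaves implicit, and you correctly include the $\tfrac{1}{2\pi}$ normalization that the paper's displayed integral omits as a typo.
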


\begin{proof}[Observation~\ref{obs: ben2 performance}]
It is easy to see that for all $x\in [0,2\pi)$ we have $\bar{t}(x)=\se(x)=x$ and $\ev(x)=0$. Therefore $\cost(x)=1+x$, and hence
\begin{align*}
&\wrs{\ben{2}} = \sup_{x\in [0,2\pi)} \{ \cost(x) \} = 1+2\pi, \\
&\avg{\ben{2}} = \E_{x\in [0,2\pi)}[\cost(x)] = \int_{x=0}^{2\pi}\left(1+x\right)dx = 1+\pi.
\end{align*}
\qed
\end{proof}

$\ben{2}$ should be understood as highly efficient on average, but inefficient in the worst case. Moreover, it should be clear that $\ben{1}, \ben{2}$ are feasible solutions to \evacw, for $w=5.1172$ and $w=1+2\pi$, respectively. 
We conjecture that $\ben{1}$ is indeed the optimal evacuation algorithm among all algorithms with worst case performance no more than $1+2\pi$. At the same time, below we show that $\ben{2}$ is the best algorithm for \evacw, when $w=5.1172$, among those previously used to improve upon the worst case performance. 
The importance of this observation is twofold; first we are motivated to study \evacw\ for the entire spectrum of $w\in [\wrs{\ben{1}},\wrs{\ben{2}}]$, and second we deduce that in order to perform well on average, we need to devise and analyze new evacuation algorithms. 

\ignore{
Algorithm $\ben{1}$ has bad average case performance, but competitive worst case performance. 
In contrast, Algorithm $\ben{2}$ performance should be understood as highly competitive on average but bad in the worst case. In fact, we conjecture the following. 
\begin{conjecture}
Every feasible evacuation algorithm for \evac\ has average case performance at least $1+\pi$. 
\end{conjecture}
}

Upper bounds for the  worst case performance of $\ben{1}$ were later improved in~\cite{CzyzowiczGKNOV15,Watten2017}, first to  5.628, and then to  5.625, using refined algorithms, respectively. The main idea behind the improvement is to understand the monoticity of $\cost(x)$ for algorithm $\ben{1}$. Indeed, the following lemma was implicit in both~\cite{CzyzowiczGKNOV15,Watten2017}, and can be  obtained numerically. 
\begin{lemma}\label{lem: monotonicity ben1}
There is $\alpha_0$, where $\alpha_0 \approx 0.96782$, so that evacuation cost $\cost(x)$ of $\ben{1}$ for \evac\ on instance $\cycle(x)$ is strictly increasing for $x\in [0,\alpha_0]$, and strictly decreasing in $x\in [\alpha_0, \pi]$. In particular, $\wrs{\ben{1}}=\cost(\alpha_0) \approx 5.73906$. 
\end{lemma}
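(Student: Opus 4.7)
The plan is to exploit the closed-form expression for the evacuation cost of $\ben{1}$ derived in the proof of Observation~\ref{obs: ben1 performance}, namely $\cost(x)=1+x+2\delta_x$ for $x\in[0,\pi]$, and to locate the maximum of $\cost$ via implicit differentiation of the defining relation of $\delta_x$. The key substitution will be $\theta(x):=x+\delta_x/2$, in terms of which both $\delta_x$ and $\cost'(x)$ admit clean expressions.

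First, I would differentiate the identity $2\sinn{x+\delta_x/2}=\delta_x$ (see \eqref{equa: def of d_a}) with respect to $x$, obtaining
\[
2\coss{\theta(x)}\cdot\bigl(1+\tfrac12\delta_x'\bigr)=\delta_x',
\qquad\text{hence}\qquad
\delta_x'=\frac{2\coss{\theta(x)}}{1-\coss{\theta(x)}}.
\]
From this,
\[
\cost'(x)=1+2\delta_x'=\frac{1+3\coss{\theta(x)}}{1-\coss{\theta(x)}}.
\]
Since $1-\coss{\theta(x)}\geq 0$, the sign of $\cost'(x)$ coincides with the sign of $1+3\coss{\theta(x)}$, so critical points are characterized by $\coss{\theta(x)}=-1/3$.

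Next I would argue that $\theta(x)$ is a strictly increasing bijection $[0,\pi]\to[0,\pi]$. Direct computation gives $\theta'(x)=1+\delta_x'/2=1/(1-\coss{\theta(x)})>0$ (wherever defined), while the boundary values $\delta_0=0$ and $\delta_\pi=0$ (the only non-negative roots of \eqref{equa: def of d_a} at $x=0$ and $x=\pi$) yield $\theta(0)=0$ and $\theta(\pi)=\pi$. Consequently $\coss{\theta(x)}$ is strictly decreasing from $1$ to $-1$ on $[0,\pi]$, so by the intermediate value theorem the equation $\coss{\theta(x)}=-1/3$ has a unique solution $\alpha_0\in(0,\pi)$, and $\cost'(x)>0$ for $x<\alpha_0$, $\cost'(x)<0$ for $x>\alpha_0$. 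This establishes the monotonicity claim.

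Finally, to pin down the numerical values, at $x=\alpha_0$ we have $\theta(\alpha_0)=\arccos(-1/3)$, hence $\delta_{\alpha_0}=2\sinn{\arccos(-1/3)}=\frac{4\sqrt{2}}{3}$ and $\alpha_0=\arccos(-1/3)-\frac{2\sqrt{2}}{3}\approx 0.96782$. Substituting into $\cost(x)=1+x+2\delta_x$ gives $\wrs{\ben{1}}=\cost(\alpha_0)=1+\arccos(-1/3)+2\sqrt{2}\approx 5.73906$, matching the stated value. The only mildly delicate step is verifying the boundary behavior of $\delta_x$ and the strict monotonicity of $\theta(x)$; once these are established the whole argument reduces to sign analysis of $1+3\coss{\theta(x)}$. \qed
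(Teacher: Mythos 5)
Your proof is correct, but it takes a genuinely different route from the paper: the paper does not actually prove Lemma~\ref{lem: monotonicity ben1} analytically --- it states that the monotonicity is ``implicit'' in~\cite{CzyzowiczGKNOV15,Watten2017} and ``can be obtained numerically'', consistent with the paper's general reliance on computer-assisted evaluation of $\cost(x)=1+x+2\delta_x$. Your implicit differentiation of $2\sinn{x+\delta_x/2}=\delta_x$, the substitution $\theta(x)=x+\delta_x/2$, and the resulting sign analysis of $\cost'(x)=(1+3\coss{\theta(x)})/(1-\coss{\theta(x)})$ replace the numerical plot by a rigorous argument, and moreover yield exact closed forms $\alpha_0=\arccos(-1/3)-\tfrac{2\sqrt2}{3}$ and $\wrs{\ben{1}}=1+\arccos(-1/3)+2\sqrt2$, which the paper (and, to a large extent, the prior literature it cites) only reports as the decimals $0.96782$ and $5.73906$. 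That is a real gain: it removes one computer-assisted step from the chain of results that depend on this lemma (e.g.\ Lemma~\ref{lem: algorithm new 1 wrst}). Two small points you should make explicit: (i) the implicit function theorem applies only where $\coss{\theta(x)}\neq 1$, i.e.\ for $x\in(0,\pi]$; at $x=0$ one has $\delta_x\sim 2(6x)^{1/3}$, so $\cost'(x)\to+\infty$ and the formula is a one-sided statement, but continuity of $\cost$ on $[0,\alpha_0]$ together with $\cost'>0$ on the open interval still gives strict monotonicity there; and (ii) to know that $\coss{\theta(x)}$ sweeps $[1,-1]$ monotonically you should note that $\theta(x)$ stays in $[0,\pi]$ (which follows from $\delta_x=2\sinn{\theta(x)}\ge 0$ together with $\theta(x)\le \pi+1<2\pi$ and continuity), so that the unique root of $1+3\coss{\theta(x)}=0$ is legitimately obtained by inverting $\theta$. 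With these remarks added, your argument is complete and strictly stronger than what the paper offers.
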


\ignore{
h[x_] := y /. FindRoot[y == 2*Sin[x + y/2], {y, 1}]
Plot[ x + 2 h[x], {x, 0.967823, 0.967826}]
}

Consider now an execution of $\ben{1}$ in which one of the robots, say $\rob{2}$ continues searching on the circle and is close to approach a location that would be the meeting point if the instance was $\cycle(\alpha_0)$. In an attempt to help expedite a potential meeting (in case $\rob{1}$ is approaching) and effectively reducing the cost of the worst case, $\rob{2}$ would make a minor detour toward the interior of the disk, before returning back to the exploration of the circle. This simple idea was explored in~\cite{CzyzowiczGKNOV15} where the following family of algorithms were introduced, parameterized by $\alpha \in [0,\pi] $ and point $B$ within the unit disk, see also right of Figure~\ref{fig: benchmark}.

\begin{definition}[1-Detour Algorithm $\algo_0(\alpha,B)$]
\label{def: 1 detour algo}
For all $t \in [0,\pi+2\norm{\cycle(\alpha)-B}]$, the trajectory of $\rob{1}$ is defined as
\begin{center}
\begin{tabular}{l || lllc}
\textit{Robot}		 	&	Phase \#  & \textit{Trajectory} & \textit{Duration}\\
\hline 
$\rob{1}$	&	1					&	$\cycle(t)$ & $\alpha$ \\
				&	2					&	$\li{\cycle(\alpha),B,t}$ &  $\norm{\cycle(\alpha)-B}$ \\
				&	3					&	$\li{B,\cycle(\alpha),t}$ &  $\norm{\cycle(\alpha)-B}$ \\
	&	4			&			$\cycle(t+\alpha)$ & $\pi-\alpha$ \\
\hline			
\end{tabular}
\end{center}

\noindent The trajectory of $\rob{2}$ is symmetric with respect to the horizontal axis. 
\end{definition}

The crux of the contribution of~\cite{CzyzowiczGKNOV15} was to prove that there exists $\alpha,B$ for which the worst case performance is no more than 5.644 (and a delicate refinement is needed to achieve 5.628). Notably, their analysis is tedious and lengthy, whereas we can obtain the same result, relying again on numerical calculations, with minimal effort. Then, \cite{Watten2017} introduced variations of $\algo_0(\alpha,B)$ in which each robot performs more than 1 detours (see Phases 2,3 of $\algo_0(\alpha,B)$). Hence, $t$-detour algorithms are parameterized by a sequence $\alpha_1, \ldots, \alpha_t$, where $\alpha_i\geq 0$ and $\sum_i \alpha_i \leq \pi$, and points $B_i$ in the disk. Even 2-detour algorithms achieve worst case performance 5.625, while for each $t\geq 2$, t-detour algorithms do induce strictly improved performance (for appropriate choices of the parameters) but the improvement is negligible. 

Motivated by the results in~\cite{CzyzowiczGKNOV15,Watten2017}, one is tempted to ask whether any algorithm in the family $\algo_0(\alpha,B)$ improves upon $\ben{1}$ with respect to the average case analysis. The next claim is due to exhaustive, computer-assisted numerical calculations, see also Figure~\ref{fig: 1detourEF}. 

\begin{theorem}
\label{thm: 1 detour efficient frontier}
For every $\alpha \in [0,\pi)$ and for every $B$ in the unit disk
$
\avg{\algo_0(\alpha,B)}\geq \avg{\ben{1}}
$.
\end{theorem}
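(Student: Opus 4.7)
The plan is to derive a computable expression for $\avg{\algo_0(\alpha,B)}$ as a function of the parameters $(\alpha,B)$, and then verify, via a rigorous numerical search over the feasible region, that this expression is bounded below by $\avg{\ben{1}}\approx 5.1172$, with equality only in the degenerate limit $B\to\cycle(\alpha)$ (equivalently $\alpha=0$), in which $\algo_0$ collapses back to $\ben{1}$. By the horizontal-axis symmetry of $\algo_0(\alpha,B)$ it suffices to average $\cost_{\alpha,B}(x)$ over $x\in[0,\pi]$, where $\cost_{\alpha,B}(x)$ denotes the evacuation cost of $\algo_0(\alpha,B)$ on instance $\cycle(x)$.

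First I would apply Corollary~\ref{cor: cost simplified} to obtain $\cost_{\alpha,B}(x)$ in piecewise closed form. For $x\in[\alpha,\pi]$ the exit is discovered by $\rob{1}$ during Phase 4 at time $\se(x)=x+2\norm{\cycle(\alpha)-B}$, while by symmetry $\rob{2}$ at that instant sits at $\cycle(-x)$; the subsequent catch-up geometry is identical to that of $\ben{1}$, yielding the pointwise identity $\cost_{\alpha,B}(x)=\cost_{\ben{1}}(x)+2\norm{\cycle(\alpha)-B}$. For $x\in[0,\alpha]$ the exit is discovered in Phase 1 at $\se(x)=x$, and a bounded number of sub-cases arise according to whether $\rob{2}$ is caught on the pre-detour arc, on the segment to $B$, on the segment back from $B$, or on the arc again after the detour; in each sub-case Lemma~\ref{lem: evac compute} reduces the meeting time $\bar t(x)$ to the smallest root of an explicit scalar function $h_x(t)$, after which $\cost_{\alpha,B}(x)=1+2\bar t(x)-x$.

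Second, integrating these pieces yields $\avg{\algo_0(\alpha,B)}$ as a concrete two-parameter expression. The $[\alpha,\pi]$ contribution exceeds its $\ben{1}$ counterpart by exactly $\tfrac{2(\pi-\alpha)}{\pi}\norm{\cycle(\alpha)-B}$, while the $[0,\alpha]$ contribution may fall below its $\ben{1}$ counterpart, since $\rob{2}$'s detour can momentarily stall its arc-progress and expedite catch-up for early exits. The entire proof hinges on showing that the latter deficit never exceeds the former surplus.

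The main obstacle is precisely this last comparison: on $[0,\alpha]$ the meeting time $\bar t(x)$ has no elementary closed form, so the inequality must be established numerically. My plan is to parametrize $B=\cycle(\alpha)+\rho(\cos\phi,\sin\phi)$, derive explicit Lipschitz bounds for $\avg{\algo_0(\alpha,B)}$ in the triple $(\alpha,\rho,\phi)$ from the piecewise formula above (each branch being a smooth function of the parameters and the root of a smooth $h_x$), and discretize the admissible region on a grid fine enough that grid evaluations, together with the Lipschitz slack, rigorously dominate all nearby parameter values. The claim then reduces to a finite collection of inequalities $\avg{\algo_0(\alpha_i,B_j)}\ge\avg{\ben{1}}+\epsilon_{i,j}$, which is precisely the exhaustive, computer-assisted check the statement advertises; the unique minimizer in the limit $\norm{\cycle(\alpha)-B}\to 0$ recovers $\avg{\ben{1}}$ exactly.
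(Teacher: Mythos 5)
Your proposal is essentially the paper's approach: the paper likewise uses Corollary~\ref{cor: cost simplified} to evaluate $\cost(x)$ for every instance of $\algo_0(\alpha,B)$ and then establishes the inequality by exhaustive computer-assisted evaluation over the parameter space (more than $500{,}000$ pairs $(\alpha,B)$, plotted in Figure~\ref{fig: 1detourEF}), so your Lipschitz-grid certification is, if anything, a more rigorous version of what the paper actually does, and your structural observation that the $[\alpha,\pi]$ contribution exceeds its $\ben{1}$ counterpart by exactly $\tfrac{2(\pi-\alpha)}{\pi}\norm{\cycle(\alpha)-B}$ is a correct refinement the paper does not spell out. One small correction: equality $\avg{\algo_0(\alpha,B)}=\avg{\ben{1}}$ holds whenever $B=\cycle(\alpha)$ for \emph{any} $\alpha\in[0,\pi]$ (the detour then has zero length and $\algo_0(\alpha,\cycle(\alpha))$ coincides with $\ben{1}$, as the caption of Figure~\ref{fig: 1detourEF} notes), so your parenthetical ``equivalently $\alpha=0$'' is not right.
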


Theorem~\ref{thm: 1 detour efficient frontier} provides strong motivation for studying problem \evacw, since it shows that in oder to establish good upper bounds, i.e. our main results depicted in Figure~\ref{fig: entire efficient frontier} and quantified later in Theorem~\ref{thm: main thm upper bound}, one needs to employ new evacuation algorithms. 
Recall that even $\wrs{\ben{1}}$ that was first calculated in~\cite{CGGKMP}, or $\wrs{\algo_0(\alpha,B)}$ first calculated in \cite{CzyzowiczGKNOV15} for various $\alpha,B$, were all estimated with computer-assisted calculations. 
Due to the nature of the problem, we are bound to rely on computer-assisted calculations as well. Notably, our much more intense computational work is feasible only because we employ the new method for computing evacuation times due to Corollary~\ref{cor: cost simplified} and Definition~\ref{def: 1 detour algo} of $\algo_0(\alpha,B)$ trajectories. Overall, in order to verify~Theorem~\ref{thm: 1 detour efficient frontier} we compute pairs 
$\left( \avg{\algo_0(\alpha,B)}, \wrs{\algo_0(\alpha,B)} \right)$ for more than 500,000 different parameter values and we depict them in Figure~\ref{fig: 1detourEF}. 

\begin{figure}[h!]
  \centering
  \includegraphics[width=0.45\linewidth]{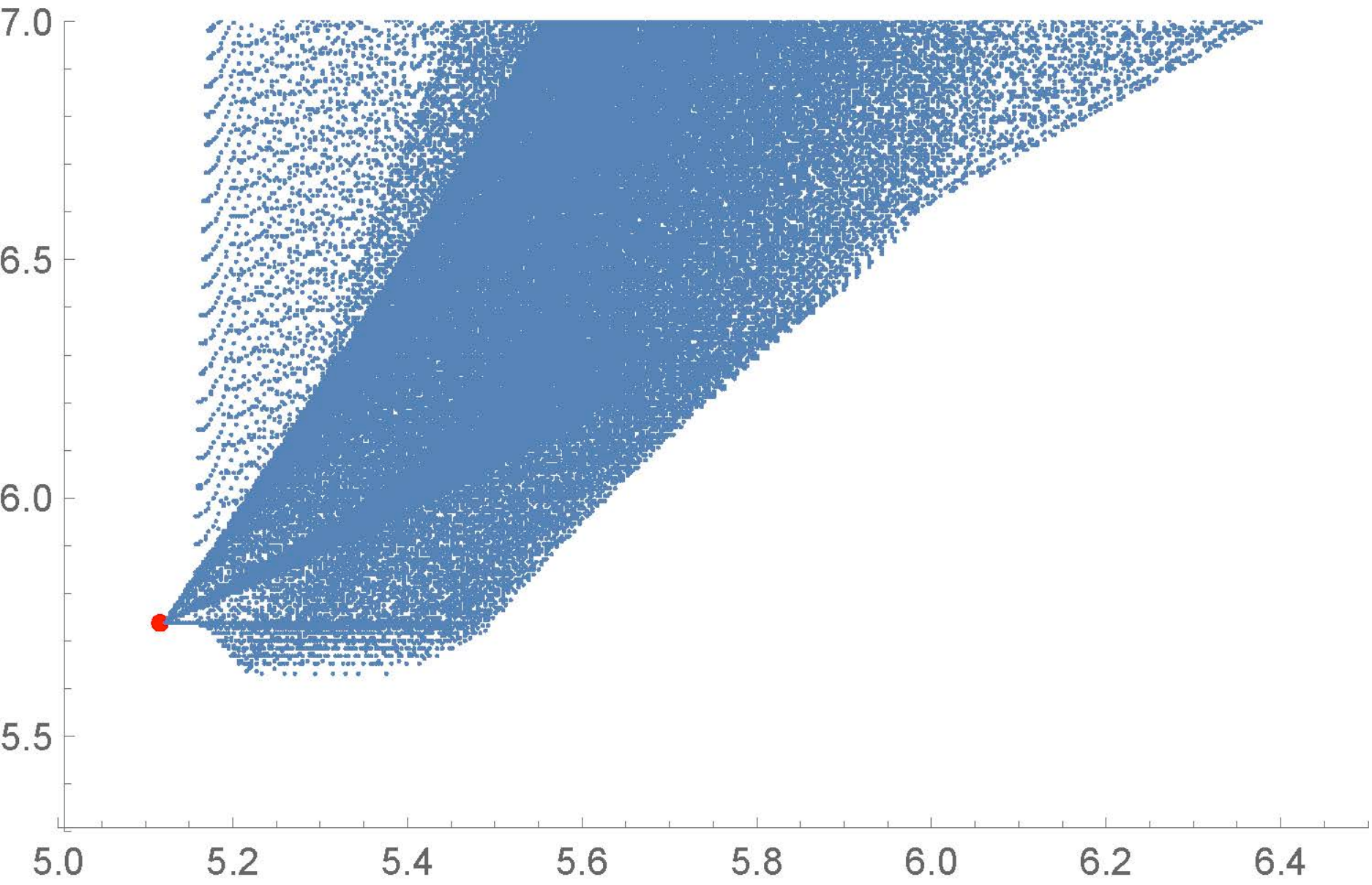}
\caption{Performance analysis of $\algo_0(\alpha,B)$ for various values of parameters $\alpha,B$. Blue points $(a,w)$ correspond to $(a,w)$-efficient algorithms $\algo_0(\alpha,B)$. The red point is $\left(\avg{\ben{1}}, \wrs{\ben{1}}\right)$, i.e. the performance of $\ben{1}$ in the average-worst case space. Note that no algorithm $\algo_0$ performs better on average than $\ben{1}$, while all  $\algo_0(t,\cycle(t))$ is exactly $\ben{1}$ for every point $t\in [0,\pi]$. 
 }
\label{fig: 1detourEF}
\end{figure}

\section{New Evacuation Algorithms}
\label{sec: new evac algos}

In this section we propose families of evacuation algorithms for problem \evacw, for the entire spectrum of
$w\in [\wrs{\ben{1}},\wrs{\ben{2}}]$. Our algorithms are summarized in Figure~\ref{fig: newalgos}. 
\begin{figure}[h!]
  \centering
  \includegraphics[width=1.0\linewidth]{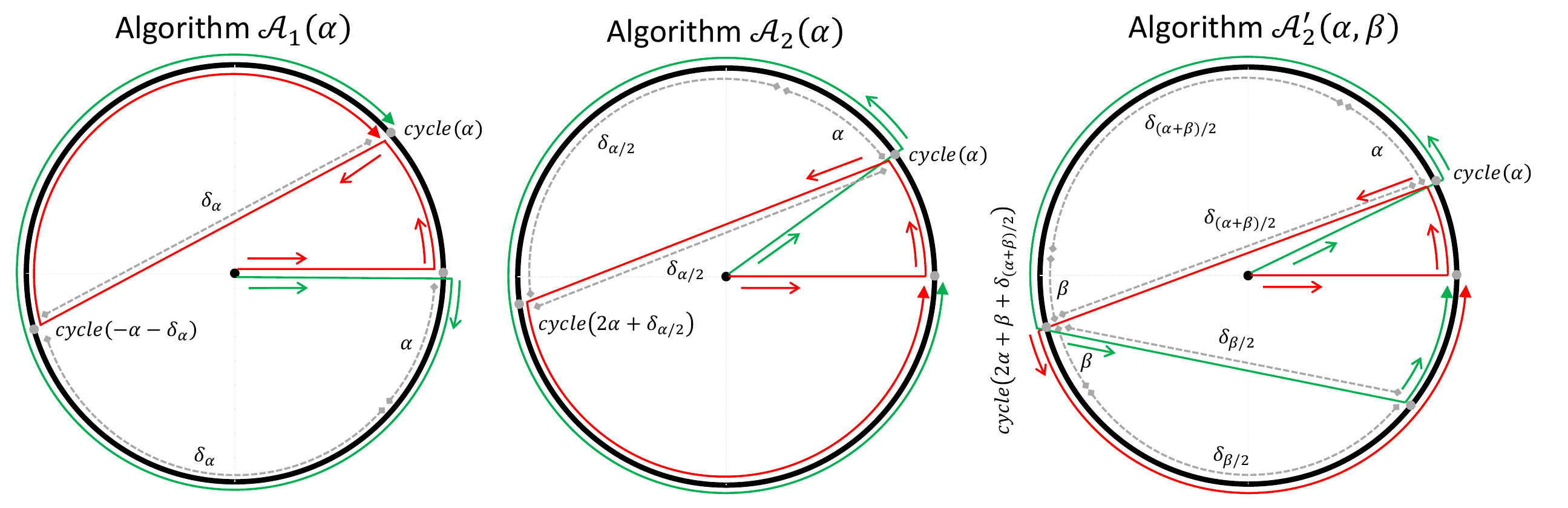}
\caption{
Robots' Trajectories for algorithms $\algo_{1},\algo_{2}, \algo_{2}'$. 
The depicted trajectories show the search of the circle, and not the evacuation step that is performed once the exit is found.
Arcs that are searched by both robots are also searched simultaneously, i.e. robots are co-located and search together.
}
\label{fig: newalgos}
\end{figure}


First we define families of evacuation algorithms that, as we show next, perform well for \evacw\ in the ``neighborhood of $\ben{1}$'', i.e. for $w$ close to $\wrs{\ben{1}}$. Our algorithms are parameterized by $\alpha$, and their circle exploration lasts $2\pi-\alpha$. 

\begin{definition}[Algorithm $\algo_{1}(\alpha)$]
\label{def: algo 1 new}
For all $t \in [0,2\pi-\alpha]$, the trajectory of $\rob{1}$ is defined as
\begin{center}
\begin{tabular}{l || lllc}
\textit{Robot}		 	&	Phase \#  & \textit{Trajectory} & \textit{Duration}\\
\hline 
$\rob{1}$	&	1					&	$\cycle(t)$ & $\alpha$ \\
				&	2					&	$\li{\cycle(\alpha),\cycle(-\alpha - \delta_\alpha),t}$ &  $\delta_\alpha$ \\
				&	3					&	$\cycle(-\alpha-\delta_\alpha -t)$ & $2\pi-2\alpha-\delta_\alpha$ \\
\hline			
\end{tabular}
\end{center}
\noindent where $\delta_a$ is defined in~\eqref{equa: def of d_a}. 
The trajectory of $\rob{2}$ is defined as $\rob{2}(t)=\cycle(-t)$, for all $t \in [0,2\pi-\alpha]$.
\end{definition}

$\algo_{1}$ is depicted in Figure~\ref{fig: newalgos} on the left. At a high level $\algo_{1}(\alpha)$ is a modification of $\ben{1}$ that is based on the following idea. 
The execution of $\algo_1(\alpha)$ is the same as in $\ben{1}$ until each robot searches an arc of length $\alpha$ (and hence $\algo(\pi)$ coincides with $\ben{1}$). 
After time $\alpha$, $\rob{1}$ abandons her trajectory and catches $\rob{2}$, on the perimeter of the circle resembling a trajectory as if the exit was located at $\rob{1}(\alpha)$. It is not difficult to see that the definition of $\delta_\alpha$ above satisfies $\rob{1}(\alpha+\delta_\alpha)=\rob{2}(\alpha+\delta_\alpha)=\cycle(-\alpha-\delta_\alpha)$.

Next we define a family of algorithms $\algo_2$ which, as we show later, perform well in the ``neighborhood of $\ben{2}$'',  i.e. for $w$ close to $\wrs{\ben{2}}$.
For this recall definition~\eqref{equa: def of d_a} of $\delta_a$. We let $\gamma_0 \approx 2.2412$
be the root of $2\alpha+\delta_{\alpha/2} = 2\pi$. For every $\alpha \leq \gamma_0$ we define a family of algorithms on parameter $\alpha$ whose circle exploration lasts $2\pi-\alpha$. 

\begin{definition}[Algorithm $\algo_{2}(\alpha)$]
\label{def: algo 2 new}
For all $t \in [0,2\pi-\alpha]$, the trajectory of $\rob{1}$ is defined as
\begin{center}
\begin{tabular}{l || lllc}
\textit{Robot}		 	&	Phase \#  & \textit{Trajectory} & \textit{Duration}\\
\hline 
$\rob{1}$	&	1					&	$\cycle(t)$ & $\alpha$ \\
				&	2					&	$\li{\cycle(\alpha),\cycle(2\alpha + \delta_{\alpha/2}),t}$ &  $\delta_{\alpha/2}$ \\
				&	3					&	$\cycle(2\alpha+\delta_{\alpha/2}+t)$ & $2\pi-2\alpha-\delta_{\alpha/2}$ \\
\hline			
\end{tabular}
\end{center}
\noindent The trajectory of $\rob{2}$ is defined as $\rob{2}(t)=\cycle(\alpha+t)$, for all $t \in [0,2\pi-\alpha]$.
\end{definition}

$\algo_2$ is depicted in the middle of Figure~\ref{fig: newalgos}. 
The condition that $\alpha\leq \gamma_0$ is added for simplicity to ensure that the latest catching point occurs while the other robot is still searching, and is not mandatory. 
At a high level $\algo_{2}(\alpha)$ is a generalization of $\ben{2}$ (note that $\algo_{2}(0)=\ben{2}$). For the first $\alpha$ time units, robots search in the same direction until $\rob{1}$ arrives at the deployment point of $\rob{2}$. Then, $\rob{1}$ catches $\rob{2}$ on the circle, as if the exit was located at $\rob{1}(\alpha)$ (which by Lemma~\ref{lem: simple catching time} happens in $\delta_{\alpha/2}$ extra time).

Finally we introduce a family of evacuation algorithms which will perform well for \evacw\ for intermediate values of $w\in [\wrs{\ben{1}},\wrs{\ben{2}}]$. 
For this we generalize family $\algo_{2}$ so that the two robots perform two alternating jumps, with parameters $\alpha, \beta$ satisfying 
$2\alpha+2\beta + \delta_{(\alpha+\beta)/2}+ \delta_{\beta/2}\leq 2\pi$, see right of Figure~\ref{fig: newalgos}.

\begin{definition}[Algorithm $\algo_{2}'(\alpha,\beta)$]
\label{def: algo 2 prime new}
For notational convenience, we set $\zeta_{\alpha,\beta}:=2\alpha + \beta + \delta_{(\alpha+\beta)/2}$. 
For all $t \in [0,2\pi-\alpha-\beta]$, the trajectories of $\rob{1}, \rob{2}$ are defined as follows
\begin{center}
\begin{tabular}{l || lllc}
\textit{Robot}		 	&	Phase \#  & \textit{Trajectory} & \textit{Duration}\\
\hline 
$\rob{1}$	&	1					&	$\cycle(t)$ 																				& $\alpha$ \\
				&	2					&	$\li{\cycle(\alpha),\cycle\left(\zeta_{\alpha,\beta}\right),t}$ 			&  $\delta_{(\alpha+\beta)/2}$ \\
				&	3					&	$\cycle\left(\zeta_{\alpha,\beta}+t\right)$ 									& $2\pi-2\alpha-\beta - \delta_{(\alpha+\beta)/2}$ \\
\hline			
\ignore{
\end{tabular}
\end{center}
\begin{center}
\begin{tabular}{l || lllc}
\textit{Robot}		 	&	Phase \#  & \textit{Trajectory} & \textit{Duration}\\
\hline 
}
$\rob{2}$	&	1					&	$\cycle(\alpha+t)$ 				& $\alpha + \beta + \delta_{(\alpha+\beta)/2}$ \\
				&	2					&	$
				\li{\cycle\left(\zeta_{\alpha,\beta} \right), 
					\cycle\left(\zeta_{\alpha,\beta}+\delta_{\beta/2} \right),
				  t}$ 			&  $\delta_{\beta/2}$ \\
				&	3					&	$\cycle\left(\zeta_{\alpha,\beta}+\beta+\delta_{\beta/2}+t\right)$ 									& $2\pi-2\alpha-2\beta - \delta_{(\alpha+\beta)/2}- \delta_{\beta/2}$ \\
\hline			
\end{tabular}
\end{center}
\end{definition}

Robots' trajectories $\alpha, \beta$ have the following meaning. As in the family of algorithms $\algo_{2}$, parameter $\alpha$ represents the arc distance the two robots have before the one preceding decides to jump ahead. In $\algo_{2}$ the two robots meet again once the jumper reaches the perimeter of the circle. In $\algo_{2}'$ the jumper deploys a little further away on the circle so that when the other robot reaches the deployment point of the jumper, the two robots are at arc distance $\beta$. As a result, the time it takes both robots to complete searching the entire circle is $2\pi-\alpha-\beta$,
as well as $\algo_2(\alpha,0)$ coincides with $\algo_2(\alpha)$. 
Finally, note that even though $\algo_{2}'$ will be eventually invoked for seemingly restricted values of $\beta$ ($\beta\leq \beta_0 \approx 0.04388$), the deviation in the performance will be significant enough (e.g. $\delta_{\beta_0/2}\approx 0.977997$) to account for its utilization in our upper bounds.

\section{Worst Case Performance Analysis}
\label{sec: new evac algos wrs}

In this section we perform worst case analysis for all algorithmic families $\algo_1, \algo_2, \algo_2'$ with respect to their parameters. 
Notably, results in this section are quantified formally and exactly by closed formulas. 
At a high level, each of $\algo_1, \algo_2, \algo_2'$ will be invoked to solve \evacw\ for different values of $w\in [\wrs{\ben{1}},\wrs{\ben{2}}]$, and each of them will have competitive average case performance for the corresponding worst case performance $w$. 
As an easy warm-up, we analyze $\algo_1$. 

\begin{lemma}[Worst Case Analysis for $\algo_1$]
\label{lem: algorithm new 1 wrst}
Let $\bar \alpha = 1+2\pi - w_{1}$, where $w_{1}= \wrs{\ben{1}}$. Then, for all $\alpha \in [0, \pi]$, we have that 
$$
\wrs{\algo_{1}(\alpha)} =
\left\{
\begin{array}{ll}
1+2\pi-\alpha&, \forall \alpha \in [0,\bar \alpha) \\
 \wrs{\ben{1}}&, \forall \alpha \in [\bar \alpha, \pi]
\end{array}
\right..
$$
\end{lemma}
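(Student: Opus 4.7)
The plan is to partition the exit location $x \in [-(2\pi-\alpha),\alpha]$ into three regions according to who finds the exit and during which phase, bound $\cost(x)$ in each, and then compare the regional maxima. A key structural observation from Definition~\ref{def: algo 1 new} is that $\rob{1}(t)=\rob{2}(t)=\cycle(-t)$ for every $t\in[\alpha+\delta_\alpha,2\pi-\alpha]$, so from time $\alpha+\delta_\alpha$ onward the two robots move together clockwise. Consequently, for any exit in the phase-3 arc $x\in[-(2\pi-\alpha),\,-\alpha-\delta_\alpha]$ we have $\ev(x)=0$ and $\cost(x)=1+\se(x)$, maximized at the last searched point $\cycle(\alpha)$ and giving $\cost\le 1+2\pi-\alpha$. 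For $x\in[0,\alpha]$, $\rob{1}$ discovers the exit at $\se(x)=x$ with $\rob{2}$ at $\cycle(-x)$; Lemma~\ref{lem: simple catching time} (arc-distance $2x$) yields $\cost(x)=1+x+2\delta_x$, which by Lemma~\ref{lem: monotonicity ben1} is bounded by $w_1$ on $[0,\alpha]$, with equality iff $\alpha\ge\alpha_0$.

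The delicate third region is $x\in[-\alpha-\delta_\alpha,0)$, where $\rob{2}$ finds the exit at time $|x|$ while $\rob{1}$ obliviously executes phase 1, the chord of phase 2, or the start of phase 3. The claim here is $\cost(x)\le 1+|x|+2\delta_{|x|}$, which, using the easy fact that $\alpha+\delta_\alpha\le \pi$ (derivable from monotonicity of $\alpha\mapsto \alpha+\delta_\alpha$), is itself bounded by $w_1$ via Lemma~\ref{lem: monotonicity ben1}. To prove the claim I invoke Corollary~\ref{cor: cost simplified} with the roles of the two robots swapped: the smallest root $\bar t\ge |x|$ of $h_x(t):=\|\rob{1}(t)-\cycle(-|x|)\|-(t-|x|)$ gives $\cost(x)=1+2\bar t-|x|$. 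Setting $t^\ast := |x|+\delta_{|x|}$, the claim reduces to showing $h_x(t^\ast)\le 0$, since $h_x(|x|)>0$ when $|x|>0$ and continuity then force a root $\bar t\le t^\ast$.

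Verifying $h_x(t^\ast)\le 0$ splits into cases by the phase of $\rob{1}$ at time $t^\ast$. If $t^\ast\le\alpha$, then $\rob{1}(t^\ast)=\cycle(t^\ast)$ and the defining equation~\eqref{equa: def of d_a} of $\delta_{|x|}$ gives $\|\cycle(t^\ast)-\cycle(-|x|)\|=2\sin(|x|+\delta_{|x|}/2)=\delta_{|x|}$, i.e. $h_x(t^\ast)=0$. If $t^\ast>\alpha+\delta_\alpha$, then $\rob{1}(t^\ast)=\cycle(-t^\ast)$ lies on the phase-3 arc and $\|\rob{1}(t^\ast)-\cycle(-|x|)\|=2\sin(\delta_{|x|}/2)\le \delta_{|x|}$ by $\sin y\le y$. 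The main obstacle is the middle case $t^\ast\in(\alpha,\alpha+\delta_\alpha]$, in which $\rob{1}(t^\ast)=(1-s)\cycle(\alpha)+s\cycle(-\alpha-\delta_\alpha)$ with $s=(t^\ast-\alpha)/\delta_\alpha$; here I expect a direct trigonometric computation showing $\|\rob{1}(t^\ast)-\cycle(-|x|)\|\le\delta_{|x|}$, exploiting the two defining relations $2\sin(\alpha+\delta_\alpha/2)=\delta_\alpha$ and $2\sin(|x|+\delta_{|x|}/2)=\delta_{|x|}$ (perhaps after rotating coordinates by $-\delta_\alpha/2$ so that the chord becomes vertical and its midpoint sits on the real axis, turning the inequality into a single-variable statement).

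Combining the three regions yields $\wrs{\algo_1(\alpha)}\le\max(w_1,\,1+2\pi-\alpha)$, and this bound is tight: $\cost(\alpha)=1+2\pi-\alpha$ always (Region II endpoint), and $\cost(\alpha_0)=w_1$ whenever $\alpha\ge\alpha_0$ (Region I). Since $\bar\alpha=1+2\pi-w_1>\alpha_0$ numerically, the two candidate maxima cross precisely at $\alpha=\bar\alpha$, giving $\wrs{\algo_1(\alpha)}=1+2\pi-\alpha$ on $[0,\bar\alpha)$ and $\wrs{\algo_1(\alpha)}=w_1$ on $[\bar\alpha,\pi]$, as claimed.
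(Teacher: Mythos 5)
Your decomposition is the same as the paper's: Region I (exit found by $\rob{1}$ during phase~1) gives $\max_{0\le x\le\alpha}\{1+x+2\delta_x\}$, controlled by Lemma~\ref{lem: monotonicity ben1}; Region II (exit found after the robots merge) gives the supremum $1+2\pi-\alpha$; and the crossover of the two candidate maxima at $\bar\alpha=1+2\pi-w_1>\alpha_0$ is argued identically. One cosmetic slip: $\cost(\alpha)=1+\alpha+2\delta_\alpha$, not $1+2\pi-\alpha$; the value $1+2\pi-\alpha$ is only the supremum over Region II, approached at $\cycle(\alpha+\epsilon)$ and not attained --- harmless, since $\wrs{\cdot}$ is a supremum.

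The one place you diverge is Region III ($x\in(-\alpha-\delta_\alpha,0)$, exit found by $\rob{2}$ before the meeting), which the paper dismisses in a single sentence (``it is easy to show that the worst case is induced either \dots or \dots''). Your reduction of the desired bound $\cost(x)\le 1+|x|+2\delta_{|x|}$ to the single inequality $h_x(t^\ast)\le 0$ with $t^\ast=|x|+\delta_{|x|}$, via Corollary~\ref{cor: cost simplified} with the robots' roles swapped, is a clean way to organize this, and your phase-1 subcase (an exact identity from~\eqref{equa: def of d_a}) and phase-3 subcase (from $\sin y\le y$) are correct. But the chord subcase $t^\ast\in(\alpha,\alpha+\delta_\alpha]$ is the actual crux of Region III, and you only announce that ``a direct trigonometric computation'' should establish $\|\rob{1}(t^\ast)-\cycle(-|x|)\|\le\delta_{|x|}$. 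Unlike the other two subcases, this is not an identity and does not follow from a one-line estimate (e.g., the triangle inequality through $\cycle(\alpha)$ fails); it is true, and numerically comfortable, but until it is carried out Region III has not been shown to be dominated and the proof is incomplete at exactly the point where the paper's own proof is also informal. In short: same route as the paper, more explicit about the step both treatments leave unproved, but that step is still missing.
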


\begin{proof}[Lemma~\ref{lem: algorithm new 1 wrst}]
First it is easy to show that the worst case evacuation time is induced either when $\rob{1}$ finds the exit while moving from $\cycle(0)$ to $\cycle(\alpha)$, or while $\rob{1}, \rob{2}$ are exploring the circle together (after having met). 
By Lemma~\ref{lem: simple catching time}, the cost in the first case would be 
$$
\max_{0\leq x \leq \alpha}\{1+x+2\delta_x\}
=
\left\{
\begin{array}{ll}
1+\alpha+2\delta_\alpha&, \textrm{if}~\alpha\leq \alpha_0 \\
\wrs{\ben{1}}&, \textrm{otherwise}
\end{array}
\right.
$$ 
where the values of the piecewise function above follow from Lemma~\ref{lem: monotonicity ben1}. In the other case, the worst placement of exit is obtained using instances $\cycle(\alpha+\epsilon)$ for arbitrary small values of $\epsilon>0$ in which case the evacuation cost becomes $1+2\pi-\alpha$. 

Overall, is is easy to see that $1+\alpha_0 +2\delta_{\alpha_0} \leq 1+2\pi-\alpha_0$ showing that the dominant evacuation cost when $\alpha\leq \bar \alpha$ is $1+2\pi-\alpha$. For $\alpha>\bar \alpha$ the evacuation cost becomes equal to $w_{1}$. 
\qed
\end{proof}

In a similar fashion, we can easily analyze $\algo_2$. 

\begin{lemma}[Worst Case Analysis for $\algo_2$]
\label{lem: algorithm new 2 wrst}
For all $\alpha \leq \pi-2$, we have 
$\wrs{\algo_{2}(\alpha)} =1+2\pi-\alpha$.
\end{lemma}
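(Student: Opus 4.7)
The plan is to partition the exit parameter $x \in [0, 2\pi)$ into three regimes according to which robot first reaches $\cycle(x)$, bound $\cost(x)$ in each, and then compare the three resulting candidate maxima under the hypothesis $\alpha \leq \pi-2$. From Definition~\ref{def: algo 2 new}, $\rob{1}$ is the sole finder on $[0,\alpha)$, $\rob{2}$ is the sole finder on the arc that $\rob{1}$ skips via her chord, namely $[\alpha,2\alpha+\delta_{\alpha/2})$, and both robots reach $\cycle(x)$ simultaneously on $[2\alpha+\delta_{\alpha/2},2\pi)$ because the two trajectories coincide at $\cycle(\alpha + t)$ from time $\alpha+\delta_{\alpha/2}$ onward. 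The $\rob{1}$-finder case falls directly under Lemma~\ref{lem: simple catching time}: at time $\se(x)=x$ the robots lie at arc distance $\alpha$ with $\rob{2}$ fleeing along the complementary arc, so $\ev(x)=\delta_{\alpha/2}$ and $\cost(x)=1+x+2\delta_{\alpha/2}$, whose supremum on this range is $1+\alpha+2\delta_{\alpha/2}$. The simultaneous-finder case is even more direct: the pair reaches $\cycle(x)$ at time $\alpha+\delta_{\alpha/2}+(x-2\alpha-\delta_{\alpha/2})=x-\alpha$, so $\cost(x)=1+x-\alpha$ with supremum $1+2\pi-\alpha$ approached as $x\to 2\pi$.

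The main obstacle will be the middle regime, where $\rob{2}$ finds the exit at time $x-\alpha$ while $\rob{1}$ is either still sweeping her initial arc or mid-chord, so a direct application of Corollary~\ref{cor: cost simplified} would require a case split on $\rob{1}$'s current phase and lead to an awkward transcendental equation. Instead, I will exhibit one concrete admissible chase strategy that upper-bounds $\cost(x)$: namely, $\rob{2}$ remains on her pre-set circular trajectory $\cycle(\alpha+t)$ until time $\alpha+\delta_{\alpha/2}$, at which moment she is co-located with $\rob{1}$ at $\cycle(2\alpha+\delta_{\alpha/2})$, and the pair then evacuates along the chord back to $\cycle(x)$. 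This yields
\[
\cost(x) \;\leq\; 1+(\alpha+\delta_{\alpha/2}) + 2\sin\!\left(\frac{2\alpha+\delta_{\alpha/2}-x}{2}\right),
\]
whose right-hand side is decreasing in $x$ on the middle range, and whose limiting value at $x=\alpha$ is $1+\alpha+2\delta_{\alpha/2}$ thanks to the defining identity \eqref{equa: def of d_a}, which gives $2\sin((\alpha+\delta_{\alpha/2})/2)=\delta_{\alpha/2}$. Consequently the middle regime does not exceed the first regime's ceiling of $1+\alpha+2\delta_{\alpha/2}$.

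The last step is to show that, under $\alpha\leq\pi-2$, the inequality $1+\alpha+2\delta_{\alpha/2} \leq 1+2\pi-\alpha$ holds, so the overall supremum is exactly $1+2\pi-\alpha$. This is equivalent to $\alpha+\delta_{\alpha/2}\leq\pi$. Setting $q:=\alpha+\delta_{\alpha/2}$ and substituting into \eqref{equa: def of d_a} gives $\alpha=q-2\sin(q/2)$, whose $q$-derivative $1-\cos(q/2)$ is non-negative on $(0,2\pi)$, so $q$ is a non-decreasing function of $\alpha$. At the boundary $\alpha=\pi-2$ one verifies directly that $d=2$ solves $2\sin(\pi/2)=2$, giving $\delta_{(\pi-2)/2}=2$ and $q=\pi$; therefore $\alpha\leq\pi-2$ forces $q\leq\pi$, which closes the argument.
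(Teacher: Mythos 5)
Your proof is correct and follows the same skeleton as the paper's: the identical three-way partition of $[0,2\pi)$ according to whether $\rob{1}$ alone, $\rob{2}$ alone, or both robots reach $\cycle(x)$ first, the same two candidate maxima $1+\alpha+2\delta_{\alpha/2}$ and $1+2\pi-\alpha$, and the same boundary computation $\delta_{(\pi-2)/2}=2$, $\alpha+\delta_{\alpha/2}=\pi$. Where you genuinely add something is the middle regime $x\in[\alpha,2\alpha+\delta_{\alpha/2})$, which the paper dismisses with ``it is not difficult to see'': your lazy-rendezvous bound, in which the finder ignores the chase and simply meets $\rob{1}$ at the pre-planned point $\cycle(2\alpha+\delta_{\alpha/2})$ at time $\alpha+\delta_{\alpha/2}$ before returning along the chord, collapses exactly onto the first regime's ceiling at $x=\alpha$ via $2\sinn{(\alpha+\delta_{\alpha/2})/2}=\delta_{\alpha/2}$, and your monotonicity argument $\alpha=q-2\sin(q/2)$, $d\alpha/dq=1-\cos(q/2)\ge 0$ for the final comparison is cleaner than the paper's. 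Two steps deserve an explicit line. First, that the lazy rendezvous really upper-bounds the protocol's cost: by Corollary~\ref{cor: cost simplified}, $\cost(x)=1+\bar t+\norm{\rob{1}(\bar t)-\cycle(x)}$; one has $h_x(t^*)\le 0$ at $t^*=\alpha+\delta_{\alpha/2}$ because $2\sin(y/2)\le y$, so $\bar t\le t^*$, and $t\mapsto t+\norm{\rob{1}(t)-\cycle(x)}$ is non-decreasing along a speed-$1$ trajectory by the triangle inequality, giving the claimed domination. Second, the monotonicity in $x$ of your middle-regime bound (and the comparison of its value at $x=\alpha$ with $1+\alpha+2\delta_{\alpha/2}$) uses $(\alpha+\delta_{\alpha/2})/2\le\pi/2$, which is precisely the hypothesis $\alpha\le\pi-2$ that you only unpack in your last paragraph; reorder so that equivalence is available when you need it. Neither point is a gap in substance.
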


\begin{proof}[Lemma~\ref{lem: algorithm new 2 wrst}]
We distinguish three cases as to where the exit is. If $x\in [0,\alpha)$, then the worst instance $\cycle(x)$ is when $x=\alpha-\epsilon$ for arbitrarily small $\epsilon>0$, and the cost is $1+\alpha+2\delta_{\alpha/2}$. In the second case $x \in [\alpha, 2\alpha+\delta_{\alpha/2})$ and it is not difficult to see that the worst case induced cost in this case is not more than that of the first case. Finally, in the third case $x\in [ 2\alpha+\delta_{\alpha/2}, 2\pi)$, and the two robots move together, so the total cost, in the worst case, is $1+2\pi-\alpha$, 
when $x=2\pi-\epsilon$ for arbitrarily small $\epsilon>0$. It is not difficult to see that the dominant case is actually the third one, and in fact the two cases induce the same cost when 
$\pi=\alpha+\delta_{\alpha/2}$. By the definition of $\delta_{\alpha/2}$ we know that $\delta_{\alpha/2}=2\sinn{\frac{\alpha+\delta_{\alpha/2}}2}=2\sinn{\pi/2}=2$. Hence the costs become equal when $\alpha=\pi-2$. 
\qed
\end{proof}

Next, we analyze $\algo_2'(\alpha,\beta)$, which requires more technical arguments. For this we will invoke $\algo_2'$ only for special parameters, whose choice is motivated by the following observation pertaining to the performance of $\algo_2$ (whose generalization is $\algo_2'$). 
From the proof of Lemma~\ref{lem: algorithm new 2 wrst}, it follows that among all algorithms $\algo_{2}(\alpha)$, where $\alpha\leq \gamma_0$ (see discussion before Definition~\ref{def: algo 2 new}), the one with minimum worst case evacuation cost is $\algo_{2}(\pi-2)$, and the cost becomes $3+\pi$. In fact, for all $w\in [3+\pi, 1+2\pi]$ there are two different values of $\alpha$ for which $\wrs{\algo_{2}(\alpha)}=w$, and we restrict $\alpha \in [0,\pi-2]$ so that we obtain evacuation algorithms with minimum average case cost. Moreover, $\alpha=\pi-2$ is the only parameter for which $\wrs{\algo_{2}(\alpha)}=3+\pi$ and as a byproduct, it is the algorithm in the family $\algo_{2}$ that minimizes the worst case. 

By Lemma~\ref{lem: algorithm new 2 wrst} we know that as $\beta\rightarrow 0$, the value of $\alpha$ that minimizes $\wrs{\algo_{2}'(\alpha,\beta)}$ approaches $\pi-2$. That value of $\alpha$ is what made the evacuation cost of $\algo_{2}(\alpha)$ attain the same value in two different (worst case) exit placements. Motivated by this, and for values of $\beta>0$ not too big, we still find the optimal choices of $\alpha$ that minimize the worst case performance. 

\begin{lemma}[Worst Case Analysis for $\algo_2'$]
\label{lem: wrs of algo 2}
Let $\beta_0=0.0438855$, and set $\alpha_\beta := \pi - \beta/2 - 2\coss{\beta/4}$. Then for all $\beta \in [0,\beta_0]$ we have 
$
\wrs{\algo_{2}'(\alpha_\beta, \beta)} = 1+\pi - \beta/2+2\coss{\beta/4}.
$
\end{lemma}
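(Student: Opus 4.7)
The plan is to extend the case analysis of Lemma~\ref{lem: algorithm new 2 wrst} for $\algo_{2}$ by carefully tracking the four arcs into which the trajectories of $\rob{1},\rob{2}$ partition the circle. First I would verify, by tracing phases 1--3 of both robots, that the natural partition is $A_1 := [0,\alpha)$ (visited only by $\rob{1}$ in phase 1), $A_2 := [\alpha,\zeta_{\alpha,\beta})$ (visited only by $\rob{2}$ in phase 1), $A_3 := [\zeta_{\alpha,\beta},\zeta_{\alpha,\beta}+\beta+\delta_{\beta/2})$ (visited only by $\rob{1}$ in phase 3, since $\rob{2}$'s phase-2 chord skips over this arc), and $A_4 := [\zeta_{\alpha,\beta}+\beta+\delta_{\beta/2},2\pi)$ (searched simultaneously, as one checks that the two robots coincide at $\cycle(\zeta_{\alpha,\beta}+\beta+\delta_{\beta/2})$ at time $\alpha+\beta+\delta_{(\alpha+\beta)/2}+\delta_{\beta/2}$). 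Within each arc I would argue the evacuation cost is non-decreasing in $x$, so the supremum is attained at the right endpoint.

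Next I would compute the right-endpoint cost in each arc. Three of them can be handled cleanly by Lemma~\ref{lem: simple catching time}. At $x\to\alpha^-$ in $A_1$, the two robots are at arc distance $\alpha$ with $\rob{2}$ escaping CCW along the complementary arc, giving cost $C_1 := 1+\alpha+2\delta_{\alpha/2}$. At $x\to\zeta_{\alpha,\beta}^-$ in $A_2$, using $\zeta_{\alpha,\beta} = 2\alpha+\beta+\delta_{(\alpha+\beta)/2}$ one verifies that $\rob{1}$ has already been in phase 3 for $x-2\alpha-\delta_{(\alpha+\beta)/2}$ time units and sits at $\cycle(x+\beta)$, so the arc distance is $\beta$ with $\rob{1}$ escaping, yielding cost $C_2 := 1+\alpha+\beta+\delta_{(\alpha+\beta)/2}+2\delta_{\beta/2}$. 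In $A_4$ no catching is needed and the cost is $C_4 := 1+2\pi-\alpha-\beta$.

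Then I would algebraically verify that $\alpha_\beta$ is reverse-engineered precisely to balance $C_1 = C_4$: rearranging gives $\delta_{\alpha/2} = \pi-\alpha-\beta/2$, and substituting into the defining identity $2\sinn{\alpha/2+\delta_{\alpha/2}/2} = \delta_{\alpha/2}$ yields $\delta_{\alpha_\beta/2} = 2\sinn{\pi/2-\beta/4} = 2\coss{\beta/4}$, whence $\alpha_\beta = \pi-\beta/2-2\coss{\beta/4}$ and the common value is $1+\pi-\beta/2+2\coss{\beta/4}$, as claimed. The threshold $\beta_0$ is then characterized as the largest $\beta$ for which $C_2(\alpha_\beta,\beta) \leq C_1(\alpha_\beta,\beta)$; after substituting $\alpha_\beta$ this reduces to a single scalar inequality in $\beta$ which holds with slack at $\beta=0$ (where $\delta_0=0$) and tightens as $\beta$ grows, giving the numerical value $\beta_0 \approx 0.0438855$.

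The hardest part will be ruling out the residual subcases of $A_2$ and $A_3$ in which the finder's companion is still in phase 1 or traveling along its phase-2 chord; here Lemma~\ref{lem: simple catching time} does not apply and I would fall back on the general Corollary~\ref{cor: cost simplified}. The key geometric observations would be that in the ``converging'' subcases (where the two robots approach each other on the circle) the chord-catching time is at most half the arc gap, while in the ``chord-interior'' subcases the shrinking line-of-sight distance can be controlled by its values at the two adjacent phase-transition endpoints via continuity. In each such case the resulting cost bound is strictly dominated by $\max(C_1,C_2,C_4)$, and hence by $C_1=C_4$ once $\beta\in[0,\beta_0]$; for instance, the crudest bound on $A_3$ gives $1+\alpha+\delta_{(\alpha+\beta)/2}+2\beta$, which is dominated by $C_2$ whenever $\delta_{\beta/2} \geq \beta/2$, and the latter follows from $\delta_{\beta/2} > 2\sinn{\beta/2} > \beta/2$. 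Combined with the monotonicity within each arc, this would complete the proof.
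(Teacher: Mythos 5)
Your proposal follows essentially the same route as the paper's proof: the same four-arc decomposition $I_1,\dots,I_4$, the same right-endpoint costs $C_1=1+\alpha+2\delta_{\alpha/2}$, $C_2=1+\alpha+\beta+\delta_{(\alpha+\beta)/2}+2\delta_{\beta/2}$ and $C_4=1+2\pi-\alpha-\beta$, the same balancing of $C_1=C_4$ to derive $\alpha_\beta$ and the common value $1+\pi-\beta/2+2\coss{\beta/4}$, and the same characterization of $\beta_0$ as the largest $\beta$ for which $C_2\leq C_1$ (which the paper writes as the numerically verified inequality $4\coss{\beta/4}-\beta-\delta_\beta''-2\delta_\beta'\geq 0$). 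The only divergences are minor: the paper dominates the $I_3$ cost by the $I_4$ cost via $\delta_{\beta_0/2}\leq |I_4|$ whereas you dominate it by $C_2$ via $\delta_{\beta/2}\geq\beta/2$, and you are more explicit than the paper about the residual subcases of $I_2$ and $I_3$ (where the companion is still in phase 1 or on its chord), which the paper dismisses with an asserted monotonicity of $\se(x)$ and $\ev(x)$.
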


\begin{proof}[Lemma~\ref{lem: wrs of algo 2}]
Let $w(\beta) = 1+\pi - \beta/2+2\coss{\beta/4}$. First we show that $w(\beta)$ is the worst case performance of $\algo_{2}'(\alpha_\beta, \beta)$ for two specific placements of the exit. 

We proceed by describing evacuation cost $\cost(x)$ assuming two arbitrary $\alpha,\beta$ for two different instances $\cycle(x)$. Using Lemma~\ref{lem: simple catching time}, we see that 
\begin{equation}
\label{equa: case 1 two jump wrst}
\lim_{\epsilon \rightarrow 0^+} \cost(\alpha-\epsilon) = 1+
\lim_{\epsilon \rightarrow 0^+} \se(\alpha-\epsilon)
+2\lim_{\epsilon \rightarrow 0^+} \ev(\alpha-\epsilon)
=
1+\alpha+2\delta_{\alpha/2}.
\end{equation}
Since the total search time is $2\pi-\alpha-\beta$, we also see that 
\begin{equation}
\label{equa: case 2 two jump wrst}
\lim_{\epsilon \rightarrow 0^+} \cost(2\pi-\epsilon)=1+2\pi-\alpha-\beta. 
\end{equation}
Now we claim that~\eqref{equa: case 1 two jump wrst},~\eqref{equa: case 2 two jump wrst} are equal when $\alpha=\alpha_\beta$. Indeed, equating~\eqref{equa: case 1 two jump wrst},~\eqref{equa: case 2 two jump wrst} gives 
\begin{equation}
\label{equa: a value 1}
a+\delta_{\alpha/2}=\pi-\beta/2. 
\end{equation}
But then, using~\eqref{equa: def of d_a}, we see that 
\begin{equation}
\label{equa: delta a/2 value 1}
\delta_{\alpha/2}=2\sinn{\frac{\alpha+\delta_{\alpha/2}}2}=2\sinn{\frac{\pi-\beta/2}2}=2\coss{\beta/4}.
\end{equation}
Substituting~\eqref{equa: delta a/2 value 1} into~\eqref{equa: a value 1}, we see that 
the value of $\alpha$ for which ~\eqref{equa: case 1 two jump wrst},~\eqref{equa: case 2 two jump wrst} are equal satisfies $\alpha= \pi - \beta/2 - 2\coss{\beta/4}$, as promised. Substituting this special value of $\alpha=\alpha_\beta$ either in~\eqref{equa: case 1 two jump wrst} or in~\eqref{equa: case 2 two jump wrst} induces evacuation cost $w(\beta)=1+\pi - \beta/2+2\coss{\beta/4}$.

Next we show that as long as $\beta$ is not too big, $w(\beta)$ is indeed the worst case evacuation cost. We consider the following cases $x\in I_i$, $i=1,\ldots,4$ for possible instances $\cycle(x)$; 
$I_1:=[0,\alpha ),
I_2:=[\alpha, 2\alpha+\beta+\delta_{(\alpha+\beta)/2} ),
I_3:=[2\alpha+\beta+\delta_{(\alpha+\beta)/2}, 2\alpha+2\beta+\delta_{(\alpha+\beta)/2}+\delta_{\beta/2} ),
I_4:=[2\alpha+2\beta+\delta_{(\alpha+\beta)/2}+\delta_{\beta/2}, 2\pi ). 
$
Clearly, \eqref{equa: case 1 two jump wrst}, \eqref{equa: case 2 two jump wrst} demonstrate the worst case evacuation costs for instances in $I_1, I_4$, respectively, and the cost in both cases, for $\alpha=\alpha_\beta$ is equal to $w(\beta)$. 

If $x\in I_2$ then $\cost(x)=1+\se(x)+2\ev(x)$. It is easy to see that both $\se(x), \ev(x)$ are monotone in $I_2$, so the worst case evacuation in this case is
\begin{equation}
\label{equa: 2jump case 2 cost}
\lim_{\epsilon \rightarrow 0^+} \cost(2\alpha_\beta+\beta+\delta_{(\alpha_\beta+\beta)/2}-\epsilon)=
1+\alpha_\beta+\beta+\delta_{(\alpha_\beta+\beta)/2}+2\delta_{\beta/2}. 
\end{equation}
Denote $\delta_{\beta/2}$ satisfying~\eqref{equa: def of d_a} by $\delta_\beta'$. 
Using~\eqref{equa: def of d_a} and the definition of $\alpha_\beta$, we see that 
$$
\delta_{(\alpha_\beta+\beta)/2}
=2\sinn{\frac{\alpha_\beta+\beta+\delta_{(\alpha_\beta+\beta)/2}}{2}}
= 2\coss{\coss{\beta/4}- \beta/4- \delta_{(\alpha_\beta+\beta)/2}}
$$
For simplicity, we denote $\delta_{(\alpha_\beta+\beta)/2}$ that satisfies the equation above by $\delta_\beta''$. Then, continuing from~\eqref{equa: 2jump case 2 cost}, the worst case evacuation cost when $x\in I_2$ becomes $1+\pi+\beta/2-2\coss{\beta/4}+\delta_\beta''+2\delta_\beta'$, an expression that depends exclusivey on $\beta$. 
The latter cost is no more than $w(\beta)$ if and only if $4\coss{\beta/4}-\beta-\delta_\beta''-2\delta_\beta' \geq 0$, and numerically we verify that this is satisfied as long as $\beta\leq \beta_0$ (see also Figure~\ref{fig: BoundOnBeta}). 
\begin{figure}[h!]
  \centering
  \includegraphics[width=0.5\linewidth]{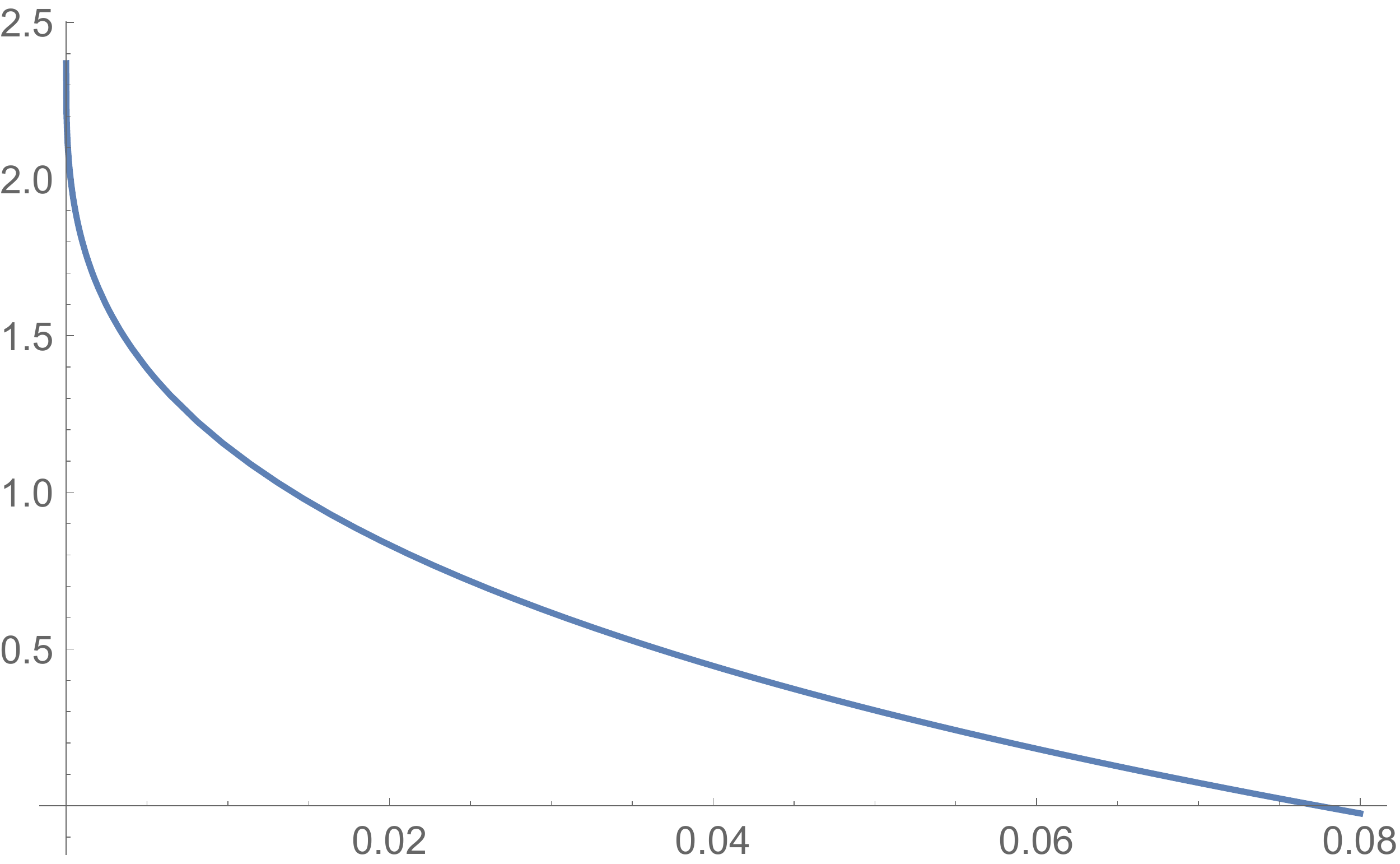}
\caption{The behavior of expression $4\coss{\beta/4}-\beta-\delta_\beta''-2\delta_\beta'$, for $\beta=0,\ldots, 0.8$. }
\label{fig: BoundOnBeta}
\end{figure}

Finally, it is easy to verify that $\delta_{\beta/2}$ and $|I_4|$ are increasing and decreasing respectively for $\beta\leq \beta_0$ and that $\delta_{\beta_0/2}=0.977997\leq 1.01099 = |I_4|$ (for $\beta=\beta_0$). As a result, the worst case evacuation cost of case $x\in I_3$ cannot exceed that of case $x\in I_4$, and hence the lemma follows. 
\qed
\end{proof}

\section{Average Case Performance Analysis \& the Efficient Frontier}
\label{sec: new evac algos avg}

In this section we perform average case analysis for all algorithmic families $\algo_1, \algo_2, \algo_2'$, with respect to their parameters. For the sake of exposition of our results, we set $w_1=\wrs{\ben{1}}\approx 5.73906, w_2=\wrs{\ben{2}}=1+2\pi\approx 7.28319$ 
and for $\beta_0 \approx 0.04388$, as in Lemma~\ref{lem: wrs of algo 2}, we set $w_0:=\wrs{\algo_{2}'(\alpha_{\beta_0}, \beta_0)} \approx 6.11953$. 
We also recall $\bar \alpha\approx 1.54419$ of Lemma~\ref{lem: algorithm new 1 wrst}. Finally, we set
\begin{align*}
v(\alpha) 			&:= 1+2\pi-\alpha \\
v_2(\beta)			&:= 1+\pi - \beta/2+2\coss{\beta/4} \\
u_1(\alpha) 		&:= 
0.00889 \alpha ^3-0.16944 \alpha ^2+0.71518 \alpha +4.23089\\
u_2'(\beta)		&:= 530.673 \beta^3 - 78.5498 \beta^2 + 7.36219 \beta + 4.70493  \\
u_2(\alpha)		&:= 0.093056 \alpha ^2+0.346659 \alpha +4.1719
\end{align*}
Combined with our findings of Section~\ref{sec: new evac algos wrs}, the main result of the current section is the following. 

\begin{theorem}
\label{thm: efficient frontier}
For every $w\in [w_1,w_2]$ there is algorithm $\algo \in \{\algo_1, \algo_2',\algo_2\}$ and unique parameter(s) $p$ such that $\wrs{\algo(p)}=w$. In particular,  \\
- for all $\alpha\in [1, \bar \alpha]$, $\algo_1(\alpha)$ is $(u_1(\alpha), v(\alpha))$-efficient, 
and $v( [1, \bar \alpha] ) =[w_1,2\pi]$, \\
- for all $\beta\in [0,\beta_0] $, $\algo_2'(\alpha_\beta, \beta)$ is $(u_2'(\beta), v_2(\beta))$-efficient, 
and $v_2([0,\beta_0])=[w_0,3+\pi]$, \\
- for all $\alpha\in [0,\pi-2]$, $\algo_2(\alpha)$ is $(u_2(\alpha), v(\alpha))$-efficient, 
and $v( [0,\pi-2])=[3+\pi,w_2]$.
\end{theorem}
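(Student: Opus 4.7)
The plan is to decompose the proof into three parallel arguments, one per algorithm family $\algo_1$, $\algo_2'$, $\algo_2$, and within each to verify three components: the worst case identity $\wrs{\algo(p)}=v(p)$ or $v_2(p)$, the average case upper bound $\avg{\algo(p)}\le u(p)$, and the image of the worst case function on the specified parameter range. The worst case components follow immediately from the lemmas already proven in Section~\ref{sec: new evac algos wrs}; the image computations are elementary endpoint evaluations; the substantive new work lies in the average case analysis.

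First I would handle the worst case and image parts together. Lemma~\ref{lem: algorithm new 1 wrst} gives $\wrs{\algo_1(\alpha)}=1+2\pi-\alpha=v(\alpha)$ for $\alpha\in[1,\bar\alpha]\subseteq[0,\bar\alpha]$; Lemma~\ref{lem: algorithm new 2 wrst} gives $\wrs{\algo_2(\alpha)}=v(\alpha)$ for $\alpha\in[0,\pi-2]$; and Lemma~\ref{lem: wrs of algo 2} gives $\wrs{\algo_2'(\alpha_\beta,\beta)}=v_2(\beta)$ for $\beta\in[0,\beta_0]$. The linear function $v$ is strictly decreasing, and $v_2'(\beta)=-1/2-(1/2)\sinn{\beta/4}$ is strictly negative on $[0,\beta_0]$, so within each family the parameter realizing a prescribed worst case target is unique. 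Direct endpoint evaluation yields $v([1,\bar\alpha])=[w_1,2\pi]$ via the definition $\bar\alpha=1+2\pi-w_1$, $v([0,\pi-2])=[3+\pi,w_2]$, and $v_2([0,\beta_0])=[w_0,3+\pi]$ using $v_2(0)=3+\pi$ together with the definition $w_0=v_2(\beta_0)$. The union of the three intervals covers $[w_1,w_2]$, establishing existence of an $(\algo,p)$ pair for every target $w$.

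For the average case bounds I would apply the unified framework of Section~\ref{sec: evacuation times}. For each algorithm and each fixed parameter, partition $[0,2\pi)$ into finitely many sub-intervals $I_j$ according to which robot first reaches the exit and in which phase of its trajectory. On each $I_j$, $\se(x)$ is an elementary closed-form function of $x$, while $\ev(x)$ is obtained either from Lemma~\ref{lem: simple catching time} as $\delta_{\alpha(x)}$ for some arc half-angle $\alpha(x)$, when both robots lie on the perimeter at discovery, or from Corollary~\ref{cor: cost simplified} as $\bar t(x)-\se(x)$ when the non-discoverer is interior to a chord phase. Summing gives $\cost(x)=1+\se(x)+2\ev(x)$ piecewise, so that
\begin{equation*}
\avg{\algo(p)}=\frac{1}{2\pi}\sum_{j}\int_{I_j}\cost(x)\,dx.
\end{equation*}

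The main obstacle is that $\delta$ is defined only implicitly via Equation~\eqref{equa: def of d_a}, and $\bar t(x)$ is the smallest root of the chord-distance function $h_x$ from Lemma~\ref{lem: evac compute}, so the integrands admit no elementary antiderivative. The plan is therefore to evaluate $\avg{\algo(p)}$ by high-precision numerical quadrature on a fine grid of parameter values covering $[1,\bar\alpha]$, $[0,\beta_0]$, and $[0,\pi-2]$ respectively, and then exhibit the low-degree polynomials $u_1(\alpha)$, $u_2'(\beta)$, $u_2(\alpha)$ as upper envelopes of the tabulated data. To promote the grid-based check to a bound valid for every $p$ in the parameter range, one combines the per-grid-point inequality with a modulus of continuity for $p\mapsto\avg{\algo(p)}$, which follows from continuous dependence of $\delta$ and $\bar t$ on the parameter together with uniform boundedness of $\cost(\cdot)$ on the compact parameter domain. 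Combined with the worst case identities and the image computations, this yields the $(u_i(p),v(p))$-efficiency claimed in the three bullet points and hence the theorem.
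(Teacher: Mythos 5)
Your proposal follows essentially the same route as the paper's proof: the worst case identities and the uniqueness/image claims are read off from Lemmata~\ref{lem: algorithm new 1 wrst},~\ref{lem: wrs of algo 2} and~\ref{lem: algorithm new 2 wrst} together with strict monotonicity of $v$ and $v_2$ and endpoint evaluation, while the bounds $\avg{\algo(p)}\leq u_1(\alpha), u_2'(\beta), u_2(\alpha)$ are established by computer-assisted numerical evaluation of $\cost(x)$ via Corollary~\ref{cor: cost simplified} over a grid of parameter values, with the polynomials exhibited as numerical upper envelopes. Your extra step of invoking a modulus of continuity in $p$ to promote the grid check to the whole parameter range is a minor tightening the paper leaves implicit, but it does not alter the approach.
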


\begin{proof}[Theorem~\ref{thm: efficient frontier}]
The claims for the worst case performances of $\algo_1, \algo_2', \algo_2$ follow directly from Lemmata~\ref{lem: algorithm new 1 wrst},~\ref{lem: wrs of algo 2} and~\ref{lem: algorithm new 2 wrst}, respectively. Next we argue that as the parameters vary in their specified range, we obtain the entire spectrum of $w\in [w_1,w_2]$, and this for unique values of the parameters. For this, we will rely on that for all evacuation algorithm families, the worst case cost is monotone in the parameters.

First, we argue about $\algo_1$. We observe that by the definition of $\bar \alpha$, $\wrs{\algo_1(\bar \alpha)}=w_1$, 
and  $\wrs{\algo_1(1)}=1+2\pi-1=2\pi$. Together with the fact that $v(\alpha)$ is strictly decreasing, we see that $\wrs{\algo_1(\alpha)}$ is 1-1 and onto to $[w_1,2\pi]$ as $\alpha$ ranges in $[1, \bar \alpha]$. 

Second, we study $\algo_2'$ whose worst case cost $v_2(\beta)$ is strictly decreasing in $\beta$. Moreover, by definition of $\beta_0$, we have $\wrs{\algo_2(\alpha_{\beta_0},\beta_0)}=w_0$. Then we note that for $\beta = 0$, $\algo_2(\alpha_\beta,\beta)$ coincides with $\algo_2(\pi-2)$, and in particular the induced worst case cost becomes $3+\pi$. Therefore 
$\wrs{\algo_2'(\alpha_\beta,\beta)}$ is 1-1 and onto to $[w_0,3+\pi]$ as $\beta$ ranges in $[0, \beta_0]$.

Third, we study $\algo_2$, for which we know that $\wrs{\algo_2(\pi-2)}=3+\pi$. Again, the worst case cost is monotone in $\alpha$ and $\algo_2(0)$ coincides with benchmark algorithm $\ben{2}$, that is $\wrs{\algo_2(0)}=w_2$. Hence, 
$\wrs{\algo_2(\alpha)}$ is 1-1 and onto to $[3+\pi,w_2]$ as $\alpha$ ranges in $[0,\pi-2]$. 

Finally, we argue that 
\begin{align*}
\avg{\algo_{1}(\alpha)} \leq u_1(\alpha) 					&, \forall \alpha \in  [1, \bar \alpha]\\
\avg{\algo_{2}'(\alpha_\beta,\beta)} \leq u_2'(\beta)	&, \forall \beta \in  [0, \beta_0]\\
\avg{\algo_{2}(\alpha)} \leq u_2(\alpha)					&, \forall \alpha \in  [0, \pi-2]
\end{align*}
For this, we numerically compute $\avg{\algo_{1}(\alpha)}, \avg{\algo_{2}'(\alpha_\beta,\beta)}, \avg{\algo_{2}(\alpha)}$
for various values of parameters $\alpha,\beta$, and we heuristically choose $u_1, u_2',u_2$ so as to upper bound the average case performance of $\algo_1, \algo_2', \algo_2$, effectively verifying our claim numerically. 
For each evacuation algorithm, we utilize Corollary~\ref{cor: cost simplified}, which together with the analytic description of our evacuation algorithms (see Definitions~\ref{def: algo 1 new},~\ref{def: algo 2 prime new}, and~\ref{def: algo 2 new})
allow us to compute their average case performance using computer-assisted calculations. 
Our numerical calculations are depicted in 
Figure~\ref{fig: avg upper bounds}.

\begin{figure}[h!]
\centering
  \includegraphics[width=.31\linewidth]{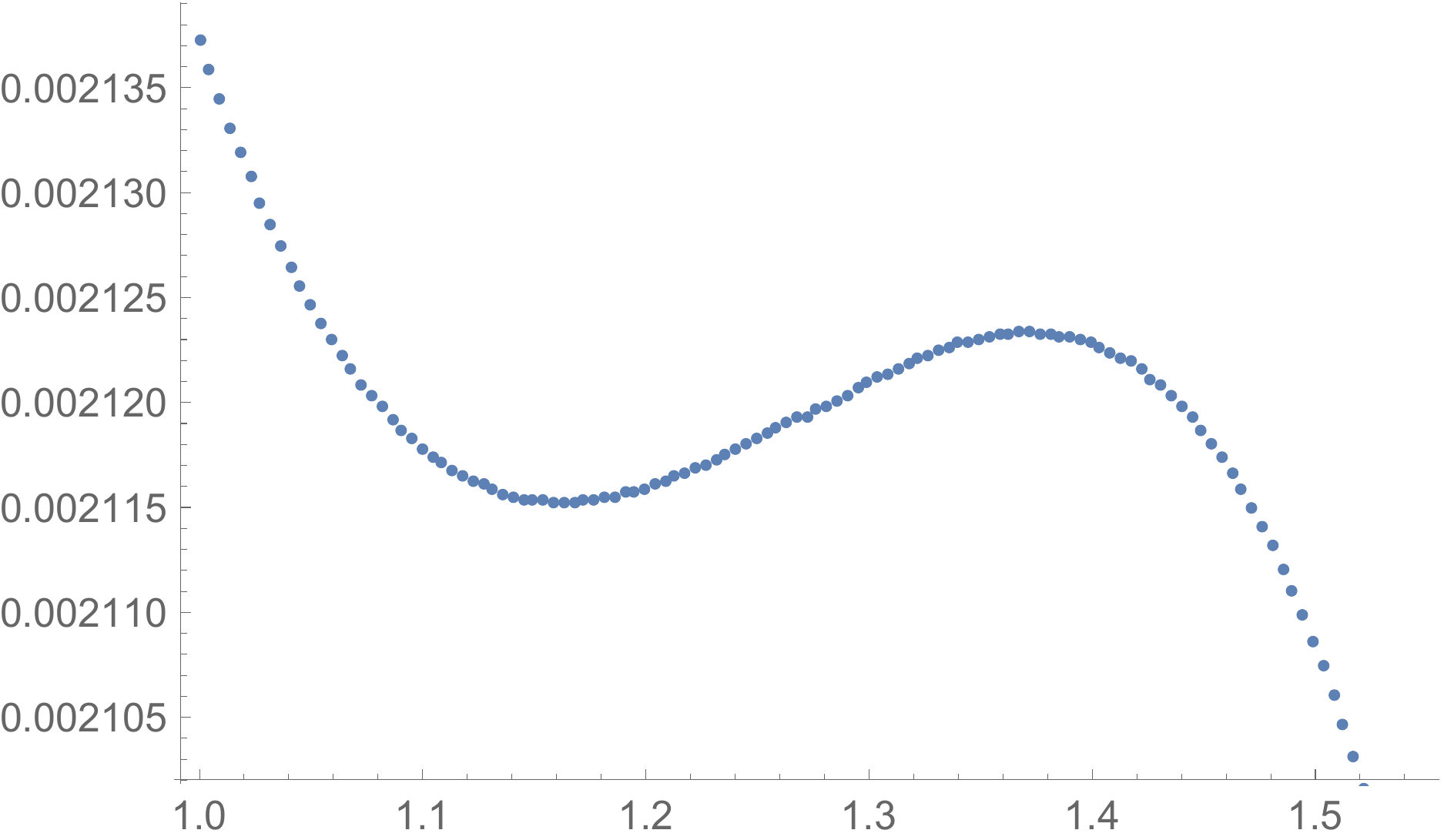}	
  ~~~ \includegraphics[width=.31\linewidth]{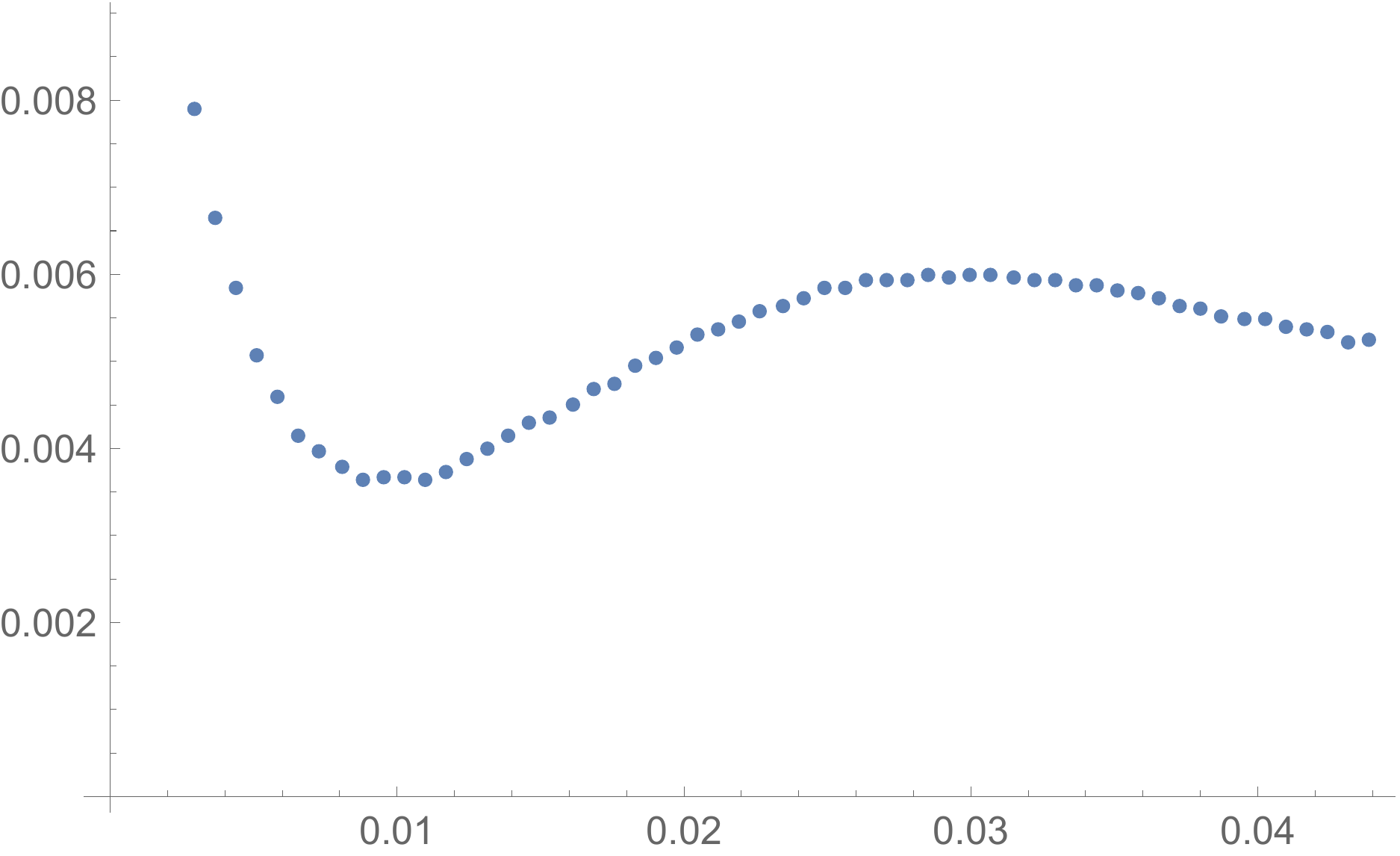}
~~~~  \includegraphics[width=.31\linewidth]{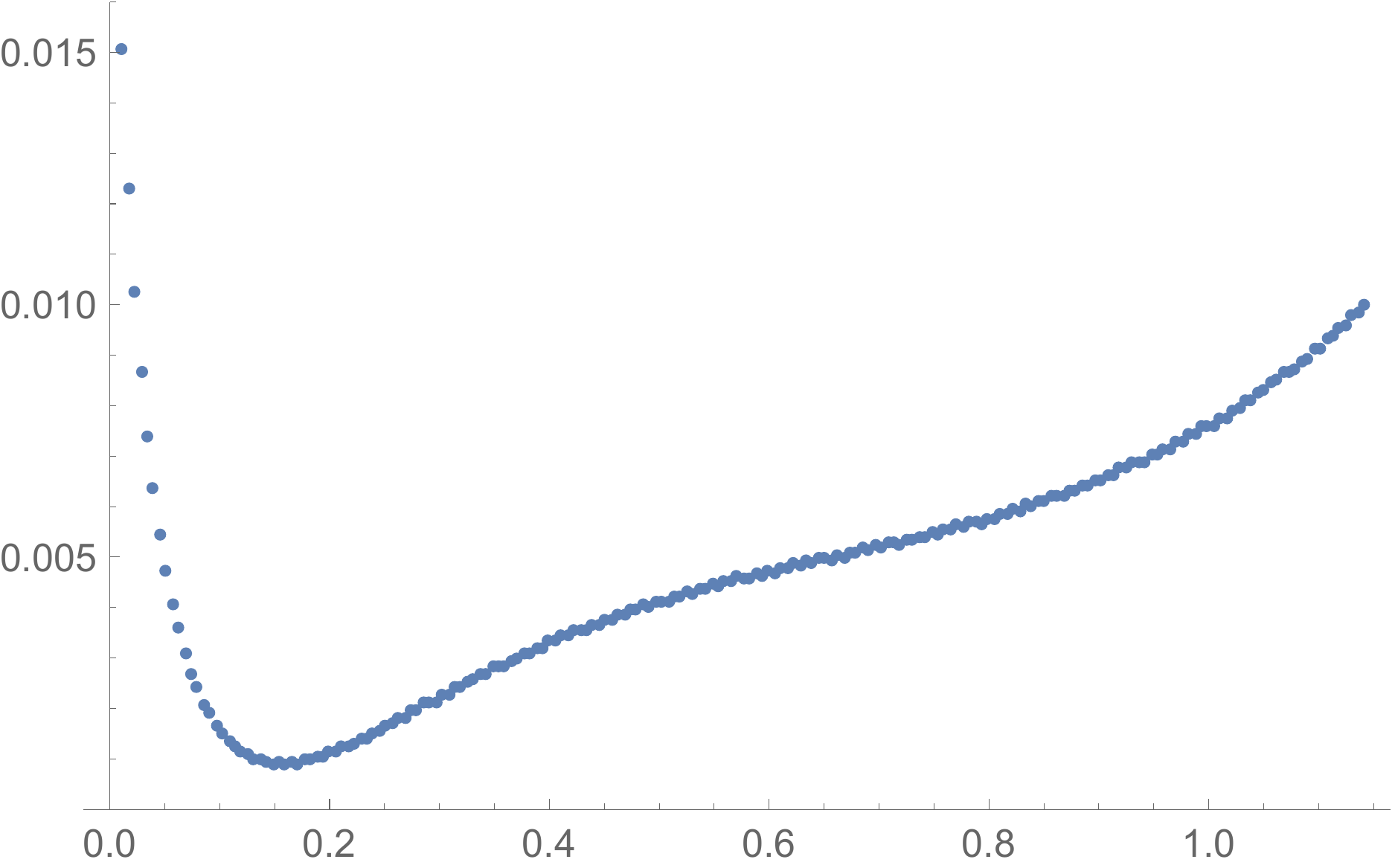}	
\caption{
On the right $u_1(\alpha)-\avg{\algo_{1}(\alpha)}$, for $\alpha'\leq \alpha \leq \bar \alpha$. 
In the middle, $u_2'(\beta)-\avg{\algo_{2}'(\alpha_\beta, \beta)}$, for $0\leq \beta \leq \beta_0$.
On the right $u_2(\alpha)-\avg{\algo_{2}(\alpha)}$, for $0\leq \alpha \leq \pi-2$.
 }
\label{fig: avg upper bounds}
\end{figure}
\qed
\end{proof}

Finally, we aim to formally quantify the efficient frontier of our algorithms as depicted in Figure~\ref{fig: entire efficient frontier} (see Section~\ref{sec: definition}). 
The parametric curves described in Theorem~\ref{thm: efficient frontier} provide, strictly speaking, an upper bound for the parametric curve of Figure~\ref{fig: entire efficient frontier}. Next, we compute $g:\reals\mapsto\reals$, so that the parametric curves of Theorem~\ref{thm: efficient frontier} are written in the form $\{(g(w), w)\}_{w\in [w_1,w_2]}$. That would also imply that there is a solution to \evacw\ of cost at most $g(w)$. 

In that direction, we study each evacuation algorithm family $\algo(p)$ with worst case performance, say, $v(p)$, and average case upper bound, say, $u(p)$. For each $w \in [w_1,w_2]$ in the range of $\algo(p)$, we set $p=v^{-1}(w)$ so that the average case performance achieved becomes $u(v^{-1}(w))$.

Recall that $\wrs{\algo_i(\alpha)}=v(\alpha)$, so that $v^{-1}(w)=1+2\pi-w$, and hence for algorithms $\algo_i$ we can easily compute $u_i(v^{-1}(w))$, $i=1,2$. 
For $\algo_2'$ we recall that $\avg{\algo_2'(\alpha_\beta,\beta)}$ is decreasing in $\beta$. Since $v_2^{-1}$ does not admit a closed form, we need to observe that $2.999+\pi-\beta/2 \leq v_2(\beta)\leq 3+\pi-\beta/2$ for all $\beta \in [0,\beta_0]$ so that an upper bound for $\avg{\algo_2'(\alpha_\beta,\beta)}$ admitting worst case performance $w$ can be computed by 
$u_2'(12.2812 -2w)$. 

Now for each $w \in [w_1,w_2]$ we need to specify which of the evacuation algorithms we will invoke. Note that in Theorem~\ref{thm: efficient frontier} we chose the range of $\alpha$ in $\algo_1$ to start from $1$ so that as to guarantee that  $\wrs{\algo_1(1)}\geq w_0$. We note that $u_2'(12.2812 -2w) = u_1(1+2\pi-w)$ for $w'\approx6.12851$, so algorithm $\algo_1$ should be invoked for $w\in [w_1,w']$ (and $w'$ is obtained for $\alpha':=1+2\pi-w'\approx 1.15468$), then $\algo_2'$ for $w\in [w',3+\pi]$ (and $w'$ is obtained for $\beta'$ so that $v_2(\beta') = w'$, where $\beta'\approx 0.0241653$), and $\algo_2$ for $w\in [3+\pi,w_2]$. 
We conclude with the next Theorem (for convenience, the values of all constants are summarized at the end of Section~\ref{sec: trajectories description}). 

\begin{theorem}
\label{thm: main thm upper bound}
For every $w \in [w_1,w_2]$, the optimal solution to \evacw\ is at most $g(w)$, where 
$$
g(w) = 
\left\{
\begin{array}{ll}
-0.00889 w^3+0.0248026 w^2+0.338241 w+3.88629
&, w\in [w_1,w'] ~~(\algo_{1}(\alpha), ~~\alpha \in [\alpha', \bar \alpha]) \\
-4245.38 w^3+77893.3 w^2-476397. w+971235
&, w\in [w',3+\pi] ~~(\algo_{2}(\alpha_\beta, \beta), ~~\beta \in [0, \beta']) \\
0.093056 w^2-1.70215 w+11.6328
&, w\in [3+\pi,w_2] ~~(\algo_{2}(\alpha), ~~\alpha \in [0, \pi-2]) 
\end{array}
\right.
$$ 
\end{theorem}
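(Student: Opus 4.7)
The plan is to build $g(w)$ as the composition $u \circ v^{-1}$ over each subinterval of $[w_1,w_2]$, where $u$ and $v$ are respectively the average-case upper bound and the worst-case value produced by the appropriate family from Theorem~\ref{thm: efficient frontier}. Since any $(a,w)$-efficient algorithm is a feasible solution to \evacw\ of value $a$, this immediately yields the desired upper bound on the optimum for every $w \in [w_1,w_2]$.

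First I would handle the outer intervals via $\algo_1$ and $\algo_2$. For both families the worst-case function is $v(\alpha)=1+2\pi-\alpha$ and thus $v^{-1}(w)=1+2\pi-w$. Substituting this into the cubic $u_1(\alpha)$ and the quadratic $u_2(\alpha)$ of Theorem~\ref{thm: efficient frontier}, and expanding, produces respectively the first and third branches of $g(w)$. The corresponding intervals of $w$ follow from the bijective monotonicity statements already established in Theorem~\ref{thm: efficient frontier}: $v$ maps $[\alpha',\bar\alpha]$ onto $[w_1,w']$ and maps $[0,\pi-2]$ onto $[3+\pi,w_2]$, so the substitution is a well-defined bijection onto each target interval.

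The middle interval is the main obstacle, because $v_2(\beta)=1+\pi-\beta/2+2\cos(\beta/4)$ has no elementary inverse. My plan is to replace $v_2^{-1}$ by a linear proxy based on the sandwich $2.999+\pi-\beta/2 \le v_2(\beta) \le 3+\pi-\beta/2$ valid on $[0,\beta_0]$, which reduces to a short Taylor estimate for $2\cos(\beta/4)$ on this tiny interval. Since $u_2'$ is monotone increasing on $[0,\beta_0]$ (verifiable directly from its derivative), substituting the linearization $\beta \mapsto 12.2812-2w$ into $u_2'$, exactly as described after Theorem~\ref{thm: efficient frontier}, yields an upper bound on $u_2'(v_2^{-1}(w))$; expanding the resulting cubic in $w$ gives the middle branch of $g(w)$.

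Finally, I would pin down the transition point $w'$ between the first two pieces. Both $\algo_1$ and $\algo_2'$ cover the overlap $[w_0,2\pi]$, so on this overlap one should invoke whichever family delivers the smaller $u$-value. Solving $u_1(1+2\pi-w)=u_2'(12.2812-2w)$ numerically yields $w'\approx 6.12851$; a direct comparison of the two cubics on the short overlap shows that $\algo_1$ dominates for $w<w'$ and $\algo_2'$ dominates for $w>w'$, which determines the algorithm to invoke in each subinterval. The endpoint values $\alpha'=1+2\pi-w'$ and $\beta'$ with $v_2(\beta')=w'$ are then immediate from Lemmas~\ref{lem: algorithm new 1 wrst} and~\ref{lem: wrs of algo 2}, and splicing the three branches together produces $g(w)$ in the stated piecewise form.
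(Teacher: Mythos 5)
Your route is the paper's own: feasibility of an $(a,w)$-efficient algorithm bounds the optimum of \evacw, the outer branches come from composing $u_1,u_2$ with $v^{-1}(w)=1+2\pi-w$, and the crossover $w'$ is obtained by equating the first two branches on their overlap. The outer branches and the splicing are fine.

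The middle branch, however, does not go through as you wrote it. The quantity $12.2812-2w=2(2.999+\pi)-2w$ comes from the \emph{lower} half of the sandwich $2.999+\pi-\beta/2\le v_2(\beta)\le 3+\pi-\beta/2$, so it is a \emph{lower} bound on $v_2^{-1}(w)$. Since, as you correctly observe, $u_2'$ is strictly increasing on $[0,\beta_0]$ (its derivative $1592.02\beta^2-157.1\beta+7.362$ has negative discriminant), evaluating $u_2'$ at a lower bound of its argument yields a \emph{lower} bound on $u_2'(v_2^{-1}(w))$, not an upper bound; moreover the algorithm $\algo_{2}'(\alpha_\beta,\beta)$ with $\beta=12.2812-2w<v_2^{-1}(w)$ has worst case $v_2(\beta)>w$ (as $v_2$ is decreasing), so it is not even feasible for \evacw. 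The safe choice is the other end of the sandwich, $\beta=2(3+\pi-w)\approx 12.2832-2w$: then $v_2(\beta)\le 3+\pi-\beta/2=w$ guarantees feasibility, and monotonicity of $u_2'$ gives $u_2'(v_2^{-1}(w))\le u_2'(\beta)$, a genuine upper bound. You are in good company here --- the paper makes the same substitution and even asserts that the average cost is decreasing in $\beta$, which contradicts $u_2'$ being increasing --- and the numerical damage is tiny (the two ends of the sandwich differ by $0.002$ in $\beta$, hence by roughly $0.01$ in $u_2'$, within the slack of the heuristically fitted polynomials), but as a proof step the inequality must be taken at the upper endpoint. With that one-line correction your argument matches the intended proof.
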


\ignore{
$
w_1 \approx 5.73906,
w_0 \approx 6.11953,
w' \approx 6.12851,
w_2 \approx 7.28319,
\alpha_0 \approx 0.96782,
\alpha' \approx 1.15468,
\bar \alpha \approx 1.54419,
\beta' \approx 0.0241653,
\beta_0 \approx 0.04388.
$
}

\section{Conclusion \& Open Problems}
\label{sec: conclusion}

Our work suggests a number of open problems directly aiming to understand \evacw\ better. Apart from generally improving our upper bounds, we find the following list of questions particularly interesting and challenging: 
\begin{enumerate}[(a)]
\item Note that when $w=\wrs{\ben{1}}$, we presented algorithm $\algo_1(\alpha)$ which, for certain value of $\alpha$, has worst case performance equal to $w$ and average case performance less that $\avg{\ben{1}}$. Is there an algorithm whose average case performance is no more than $\avg{\ben{1}}$, and worst case performance strictly less than $w$?
\item  Is it true that the best possible efficient frontier is given by a smooth transition between families of evacuation algorithms? Note that $\algo_2$ naturally extends $\ben{2}$, $\algo_2'$ naturally extends  $\algo_2$, and that $\algo_1$ naturally extends $\ben{1}$. However, $\algo_1$ and $\algo_2'$ behave differently, even though their efficient frontier agrees for certain values of the parameters. 
\item $\avg{\ben{2}}=1+\pi$, and none of our algorithms beat this performance. We conjecture that this is the best possible average evacuation time, even in the wireless model, and for any number of robots.
\end{enumerate}
Apart from the list above, we believe that the direction of studying randomized algorithms for evacuation-type problems, especially with respect to average case/worst case trade-offs is of special interest, and should be considered for existing as well as for new search problems in the area. 

\bibliography{refs}
\bibliographystyle{plain}

\end{document}